\newcommand{\hide}[1]{}
\newtheorem{theorem}{Theorem}
\newtheorem{lemma}{Lemma}
\newtheorem{corollary}{Corollary}
\newtheorem{example}{Example}
\newtheorem{proposition}{Proposition}
\tikzset
{
  VSt/.style =
  {
    circle,  
    inner sep=-0.5,
    minimum size = 1.5mm,
    fill = black!10,
    draw = black,
  }
}
\tikzset
{
  SinkSt/.style =
  {
    rectangle,  
    inner sep=-0.5,
    minimum size = 1.1mm,
    fill = black!80,
    draw = black,
  }
}
\tikzset
{
  PointSt/.style =
  {
    circle,  
    inner sep=-0.5,
    minimum size = 0.8mm,
    fill = black!80,
    draw = black,
  }
}
\tikzset
{
  TickStH/.style =
  {
    rectangle,  
    inner sep=-0.5,
    minimum height = 0.5pt,
    minimum width = 1mm,
    fill = black,
    draw = black,
  }
}
\tikzset
{
  TickStV/.style =
  {
    rectangle,  
    inner sep=-0.5,
    minimum height = 1mm,
    minimum width = 0.5pt,
    fill = black,
    draw = black,
  }
}
\journal{Theoretical Computer Science}
\begin{document}
\begin{frontmatter}
\title{Minsum $k$-Sink Problem on Path Networks}%
\author{Robert Benkoczi,$^a$ Binay Bhattacharya,$^b$
Yuya Higashikawa,$^c$
Tsunehiko~Kameda,$^{b,1}$
and Naoki Katoh$^d$}
\address{$^a$Dept. of Math \& Computer Science, Univ. of Lethbridge, Lethbridge, Canada\\
$^b$School of Computing Science, Simon Fraser University,
Burnaby, Canada\\
$^c$School of Business Administration, Univ. of Hyogo, Kobe, Japan\\
$^d$School of Science \& Technology, Kwansei Gakuin Univ., Sanda, Japan}
\fntext[label1]{Corresponding author. email: tiko@sfu.ca, Phone: $+$1-778-887-9577.}

\begin{abstract}
We consider the problem of locating a set of $k$ sinks on a path network
with general edge capacities
that minimizes the sum of the evacuation times of all evacuees.
We first present an $O(kn\log^4n)$ time algorithm when the edge capacities are non-uniform,
where $n$ is the number of vertices.
We then present an $O(kn\log^3 n)$ time algorithm when the edge capacities are uniform.
We also present an $O(n\log n)$ time algorithm for the special case where $k=1$
and the edge capacities are non-uniform.
\end{abstract}
\begin{keyword}
Facility location\sep sink location problem\sep evacuation problem\sep minsum criterion\sep dynamic flow in network
\end{keyword}
\end{frontmatter}

\section{Introduction}
Due to many recent disasters such as earthquakes, volcanic eruptions, hurricanes,
and nuclear plant accidents,
evacuation planning is getting increasing attention.
The evacuation $k$-sink problem is an attempt to model evacuation
in such emergency situations~\cite{cheng2013,hamacher2002}.
In this paper,
a {\em $k$-sink} means a set of $k$ sinks that minimizes the sum
of the evacuation time of every evacuee to a sink.

Researchers have worked mainly on two objective functions.
One is the evacuation completion time (minmax criterion),
and the other is the sum of the evacuation times of all the evacuees
(minsum criterion).
It is assumed that all evacuees from a vertex evacuate to the same sink.

Mamada {\em et al.}~\cite{mamada2006} solved the minmax 1-sink problem
for tree networks in $O(n \log^2 n)$ time under the condition that
only a vertex can be a sink.
When edge capacities are uniform,
Higashikawa et al.~\cite{higashikawa2014b} and
Bhattacharya and Kameda~\cite{bhattacharya2015b}
presented $O(n \log n)$ time algorithms
with a more relaxed condition that the sink can be on an edge.
Chen and Golin~\cite{chen2016b} solved the minmax $k$-sink problem on
 tree networks in $O(k^2n\log^5n)$ time when the edge capacities are non-uniform.
Regarding the minmax $k$-sink on path networks,
Higashikawa {\em et al.}~\cite{higashikawa2015a} present an algorithm to compute
a $k$-sink in $O(kn)$ time if the edge capacities are uniform.
In the general edge capacity case,
Arumugam {\em et al.}~\cite{arumugam2014} showed that a $k$-sink 
can be found in $O(kn \log^2 n)$ time.
Bhattacharya et al.~\cite{bhattacharya2017a} recently improved these results to
$O(\min\{n\log n, n + k^2\log^2n)$ time in the uniform edge capacity case,
and to $O(\min\{n\log^3n, n\log n +k^2\log^4n\})$ time in the general case.
Table~\ref{tbl:survey} is a list of known algorithms and their time complexities.
\begin{table}[h]\label{tbl:survey}
\centering
\begin{tabular}{|l|l|l|}
\hline
Topology	&Problem		&Time complexity \\
\hline\hline
Path 		&1-sink [U]	&$O(n)$~\cite{higashikawa2015a}\\
\cline{2-3}
		&2-sink [U]	&$O(n)$~\cite{higashikawa2015a} \\
\cline{2-3}
		&2-sink [G]	&$O(n\log n)$~\cite{bhattacharya2017a}\\
\cline{2-3}
		&$k$-sink [U]	&$O(kn)$~\cite{higashikawa2015a}, $O(n + k^2\log^2n)$~\cite{bhattacharya2017a}, $O(n\log n)$ \cite{bhattacharya2017a}\\
\cline{2-3}
		&$k$-sink [G]	&$O(n\log n + k^2\log^4n)$~\cite{bhattacharya2017a}, $O(n\log^3 n)$ \cite{bhattacharya2017a}\\
\hline
Tree	&1-sink [U]	&$O(n\log n)$~\cite{bhattacharya2015b,higashikawa2014b}\\
\cline{2-3}
		&1-sink [G]	&$O(n\log^2n)$~\cite{mamada2004}\\
\cline{2-3}
		&$k$-sink [U]	& $O(kn^2\log^4n)$ \cite{chen2016b}, $O(\max\{k,\log n\} kn\log^3n)$ \cite{chen2014}\\
\cline{2-3}
		&$k$-sink [G]	& $O(kn^2\log^5n)$ \cite{chen2016b}, $O(\max\{k,\log n\} kn\log^4n)$ \cite{chen2014}\\
\hline
Cycle	&1-sink  [U]	&$O(n)$ \cite{benkoczi2018b}\\
\cline{2-3}
		&1-sink [G]	& $O(n\log n)$ \cite{benkoczi2018b}\\
\hline
\end{tabular}
\vspace{2mm}
\caption{Most efficient algorithms for finding completion-time sinks.
 [U] means Uniform edge capacities and [G] means General (non-uniform) edge capacities.}
\end{table}

The minsum objective function for the sink problems is motivated,
among others,
by the desire to minimize the transportation cost of evacuation
or the total amount of psychological duress suffered by the evacuees.
It is more difficult than the minmax variety because the objective cost function
is not unimodal along a path,
and, to the best of our knowledge, practically nothing is known about this problem
on more general networks than path networks.
A path network, although simple, can model an airplane aisle, a hall way in a building,
a street, a highway, etc., to name a few.
For the simplest case of $k=1$ and uniform edge capacities,
Higashikawa et al.~\cite{higashikawa2015a} proposed an $O(n)$ time algorithm.
In Sec.~\ref{sec:1sink} of this paper we present an $O(n\log n)$ time algorithm for the case of $k=1$
and general edge capacities.
For the case of general $k$ and uniform edge capacities,
Higashikawa et al.~\cite{higashikawa2015a} showed that
a $k$-sink can be found in time bounded by $O(kn^2)$ and $2^{O(\sqrt{\log k \log\log n})}n^2$.
Bhattacharya et al.~\cite{bhattacharya2018a} recently showed that 
a minsum 1-sink in path networks with uniform edge capacities that achieves
 {\em minmax regret}~\cite{kouvelis1997} can be computed in $O(n^2\log^2n)$ time.

A somewhat related problem,
the {\em quickest transshipment problem},
is defined by dynamic flows with a given set of sources and sinks:
each source has a fixed amount of supply, and each sink has a fixed demand.
The problem is to send exactly the right amount of supply from each source to each sink
with the minimum overall time.
This problem has been studied for over fifty years since the seminal work by Ford and Fulkerson~\cite{ford1958}. 
The standard technique to solve this problem is to consider discrete time steps and make a copy of the original
network for every time unit from time zero to a time horizon $T$.
This process produces a {\em time-expanded network}~\cite{ford1958}.
Gale \cite{gale1959} posed {\em the earliest arrival $s$-$t$-flow problem}, 
which is the problem of maximizing the amount of flow that reaches the single sink
by any time $\theta$, $0\le \theta \le T$.
See Wilkinson \cite{wilkinson1971}, and Minieka \cite{minieka1973}, 
Baumann and Skutella~\cite{baumann2006} for more recent results.
 
The main contributions of this paper are $O(kn\log^4 n)$ and $O(kn\log^3 n)$
time algorithms for computing a minsum $k$-sink, 
in the non-uniform and uniform edge capacity cases,
respectively.
These are improvements over our $O(kn^2\log^4n)$ and $O(kn^2\log^3n)$ time
algorithms we presented at {\sc Iwoca} 2018~\cite{benkoczi2018a}.
In achieving these results,
we introduce two innovative methods.
One is used in Sec.~\ref{sec:switchingPoint} to efficiently optimize functions formulated
as Dynamic Programming (DP),
and the other is used in Sec.~\ref{sec:Rij} to compute the cost of moving the evacuees on a subpath to
a potential sink.
We also present an $O(n\log n)$ time algorithm for the special case where $k=1$
and the edge capacities are non-uniform.

This paper is organized as follows.
In the next section, we
define some terms that are used throughout this paper,
and present a few basic facts.
Sec.~\ref{sec:cluster} introduces the concepts of a cluster and section
(intuitively, a bunched group of moving evacuees),
which play a key role in subsequent discussions.
In Sec.~\ref{sec:1sink},
we design an algorithm which finds a minsum 1-sink.
Sec.~\ref{sec:ksink} formulates the framework for solving the minsum $k$-sink problem,
utilizing DP.
In Sec.~\ref{sec:algorithm}
we implement a solution method to the DP formulation,
and analyze its complexity.
Finally, Sec.~\ref{sec:conclusion} concludes the paper.

\section{Preliminaries}\label{prelim.se}
\subsection{Model}\label{sec:model}
Let $P(V,E)$ denote a given path network,
where the vertex set $V$ consists of $v_1, v_2, \ldots, v_n$,
which we assume to be arranged in this order,
from left to right horizontally.
Vertex $v_i$ has weight $w_i \in \mathbb{R}^+$ (set of positive reals),
representing the number of evacuees initially located at $v_i$,
and edge $e_i=(v_i,v_{i+1})\in E$ has {\em length} or {\em distance} $d_i~(> 0)$
and {\em capacity} $c_i$,
which is the upper limit on the flow rate through $e_i$ in persons/unit time.
By $p\in P$,
we mean that point $p$ lies on $P$.
For $p,q\in P$ we write $p \prec q$ if $p$ lies to the left of $q$.
For two points $p \preceq q$,
the subpath between them is denoted by $P[p,q]$,
and $d(p,q)$ (resp. $c(p,q)$) denotes its length
(resp.~the minimum capacity of the edges on $P[p,q]$).
It takes each evacuee $\tau$ units of time to travel a unit distance on any edge.

Our model assumes that the evacuees at every vertex start evacuation at the same time
at the rate limited by the capacity of its outgoing edge.
We sometimes use the term ``cost'' to refer to the aggregate evacuation time
of a group of evacuees to a certain destination.
A {\em $k$-sink},
which means a set of $k$ sinks,
shares the following property of the {\em median} problem~\cite{kariv1979b}.
\begin{lemma}{\rm \cite{higashikawa2015a}}\label{lem:atVertex}
There is a $k$-sink such that all the $k$ sinks are at vertices.
\end{lemma}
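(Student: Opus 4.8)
The plan is to start from an arbitrary optimal $k$-sink $X=\{x_1,\dots,x_k\}$ (whose sinks may lie in the interiors of edges) together with an optimal assignment of each vertex's evacuees to one of these sinks, and to show that each sink can be slid to an incident vertex without increasing the total cost. Since Lemma~\ref{lem:atVertex} asserts only existence, it suffices to exhibit one such all-at-vertices solution; I would freeze the assignment throughout, so that no reassignment argument and no claim of contiguity of the served intervals is needed.

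First I would isolate a single sink $x_j$ lying in the interior of an edge $e_i=(v_i,v_{i+1})$ and study the total cost it incurs as a function of its position $x\in[v_i,v_{i+1}]$, \emph{holding the set of vertices assigned to $x_j$ fixed}. The evacuees assigned to $x_j$ split into a left group $L$ (those at vertices $\preceq x$), who move rightward, and a right group $R$ (those at vertices $\succ x$), who move leftward; because the sink at $x$ splits edge $e_i$ into the two disjoint segments $[v_i,x]$ and $[x,v_{i+1}]$, the two groups never contend for the capacity $c_i$ of $e_i$, so their contributions can be analysed independently.

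The crux — and the step I expect to be the main obstacle — is to show that each of these two contributions is \emph{linear} in $x$ on the open edge. For the left group, pushing the sink rightward by $\delta$ merely appends a length-$\delta$ stretch of edge $e_i$ at the very end of every left evacuee's journey. The bottleneck capacity governing the flow of $L$ into the sink is $\min\{c_1,\dots,c_i\}\le c_i$, which is unaffected by lengthening this terminal segment; hence the entire congested arrival profile is shifted later by exactly $\tau\delta$, and the aggregate arrival time of $L$ increases by $\tau\delta\cdot W_L$, where $W_L$ is the total weight of $L$. The delicate point to justify carefully is precisely this ``pure time-shift'': that enlarging the final segment within one edge alters neither the queueing structure nor the bottleneck, so the congestion pattern only translates in time rather than being reshaped. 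By the symmetric argument the aggregate time of the right group \emph{decreases} by $\tau\delta\cdot W_R$.

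Consequently the cost attributable to $x_j$ is an affine function of $x$ on $[v_i,v_{i+1}]$ with slope $\tau(W_L-W_R)$, and by continuity its minimum over the closed interval is attained at $v_i$ or $v_{i+1}$, i.e.\ at a vertex. Finally I would apply this relocation to each sink in turn; since the assignment is frozen and the remaining sinks are untouched, each move leaves the total cost nonincreasing, and after processing all $k$ sinks we obtain an optimal $k$-sink with every sink at a vertex (two sinks coinciding at a vertex is harmless, as any surplus sink may simply be placed at an arbitrary vertex). This establishes the lemma.
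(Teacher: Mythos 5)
The paper itself offers no proof of this lemma: it is imported verbatim from Higashikawa et al.~\cite{higashikawa2015a}, so there is no in-paper argument to compare yours against. Your sliding argument is the standard route to such vertex-optimality statements and its core is sound. The key linearity step holds for exactly the reason you give: for $x$ in the open interior of $e_i$, the departure profile of the left group out of $v_i$ is determined by $c_i$ together with the arrival/queueing process at $v_i$, neither of which depends on where in $e_i$ the sink sits; the arrival profile at $x$ is therefore a pure time-shift of that fixed profile, the intra cost is unchanged, and the extra cost grows by exactly $\tau\delta W_L$. The two groups occupy disjoint segments $[v_i,x]$ and $[x,v_{i+1}]$ and never contend for $c_i$, which is precisely the decomposition $\Phi=\Phi_L+\Phi_R$ in Eq.~\eqref{eqn:Phisx}. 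Freezing the (contiguous, per-vertex) assignment and moving one sink at a time is legitimate, since relocating a sink within its own subpath does not affect the other subpaths.

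One point to tighten: the appeal to continuity on the closed edge is not quite right, because the cost is in general \emph{discontinuous} at the endpoints. At $x=v_i$ the left group is never forced through $e_i$ at rate $c_i$, whereas for every interior $x$ it is, so $\lim_{x\to v_i^+}\Phi_L(x)$ can strictly exceed $\Phi_L(v_i)$ (and symmetrically for $\Phi_R$ at $v_{i+1}$). Fortunately both jumps are downward at the endpoints: the value at each vertex is at most the corresponding one-sided limit of the affine function on the open interval, so the minimum over the closed edge is still attained at $v_i$ or $v_{i+1}$. Replace ``by continuity'' with this observation and the proof is complete.
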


If we plot the arrival flow rate at, or departure flow rate from, a
vertex as a function of time,
it consists of a sequence of {\em (temporal) clusters}.
The {\em duration} of a cluster is the length of time
in which the flow rate corresponding to the cluster is greater than zero.
A cluster consists of a sequence of {\em sections},
such that any adjacent pair of sections in it have different heights.
In other words, a section is a maximal part of a cluster with the same {\em height} 
(= flow rate).
A {\em simple cluster} consists of just one section.
Clearly,
in the uniform edge capacity case,
all clusters are simple.
A time interval of flow rate 0 between adjacent clusters is called a {\em gap}.
Unless otherwise specified,
we assume that evacuees arrive at vertex $v_i$ from vertex $v_{i+1}$.
The case where the evacuees move rightward can be treated symmetrically.

The  {\em head vertex} of a cluster/section is the vertex from which the evacuee corresponding to
the start time of the cluster/section originates.
The {\em offset} of a cluster with respect to vertex $v_i$ is the time until the first evacuee
belonging to the cluster arrives at $v_i$.
We say that a cluster/section {\em carries} (the evacuees from) vertex $v_i$,
if those evacuees provide flow to the cluster/section.

We define the following cost functions.
\begin{eqnarray}
\Phi_L(x) &\triangleq&
            \text {cost contribution to $x$ from~} P[v_1,v_i] \text{~if } v_i\prec x \preceq v_{i+1},\nonumber\\
\Phi_R(x) &\triangleq& \mbox{cost contribution to $x$ from~} P[v_{i+1},v_n] \text{~if } v_i\preceq x \prec v_{i+1}, \nonumber
\end{eqnarray}
and
\begin{equation}\label{eqn:Phisx} 
\Phi(x) = \Phi_L(x) + \Phi_R(x).\\
 \end{equation}
A point $x=\mu$ that minimizes $\Phi(x)$ is called a {\em minsum 1-sink}.

The total cost is the sum of the costs of all the sections.
The cost of a section of height $c$ with offset $t_0$ and duration $\delta_t$
is given by
\begin{equation}\label{eqn:blockCost}
\lambda t_0 + \frac{\lambda^2}{2c},
\end{equation}
where $\lambda = c \delta_t$ is the number of evacuees carried by the section~\cite{higashikawa2015c}.
The average evacuation time for an evacuee carried by this section is $t_0 + \lambda/2c$,
where $\lambda/2c$ represents the average delay before departure from the head vertex of the section,
and the aggregate is given by $(t_0 + \lambda/2c)\times \lambda$,
which yields \eqref{eqn:blockCost}.
The first term in \eqref{eqn:blockCost} represents the time for all evacuees carried by the section
to travel from the head vertex of the section to the destination, which is $t_0$ away,
and is called the {\em extra cost} of the section.
The second term in \eqref{eqn:blockCost} represents the time for all evacuees carried by the section
to travel from their origin vertices to the head vertex of the section,
and is called the {\em intra cost} of the section.
To be exact, the ceiling function ($\lceil~\rceil$) must be used to compute costs,
because the last group of evacuees to leave the head vertex may not occupy the full
capacity of the outgoing edge.
but we omit it for simplicity,
and {\em adopt} \eqref{eqn:blockCost} as our objective function~\cite{cheng2013}.
Or we can consider each molecule of a fluid-like material as an ``evacuee.''

A {\em minsum k-sink} partitions the path into $k$ subpaths,
and places a 1-sink on each subpath in such a way that
the cost of the max-cost 1-sink is minimized.
We shall solve the easier 1-sink problem first and the $k$-sink problem later.

\section{Cluster/section sequence}\label{sec:cluster}
In the 1-sink problem,
to compute the intra and extra costs at $v_i$,
we obviously need to know the arrival section sequence at $v_i$.
Let $\alpha_R(v_i)$ (resp. $\beta_R(v_i)$) denote the arrival section sequence
at (resp. departure section sequence from) vertex $v_i$ from {\em R}ight,
both moving left.
It is clear that $\alpha_R(v_n) = \beta_R(v_1)=\Lambda$,
where $\Lambda$ denotes the empty sequence.
Let us compute $\alpha_R(v_i)$ successively for $i=n, n-1,\ldots,1$.
If the height of an arriving section at vertex $v_i$ is higher than $c_{i-1}$,
the evacuees carried by that section cannot depart from $v_i$ at the arrival rate.
We see that the duration of the section gets stretched in this case,
by the {\em ceiling} operation~\cite{mamada2006}.
From now on, we use the verb {\em ceil} to mean performing a ceiling operation.
Moreover, 
when the first evacuee of $\alpha_R(v_i)$ arrives at $v_i$,
there may be a {\em backlog} of delayed evacuees still waiting to depart from $v_i$.
We use those delayed evacuees to fill gaps in $\alpha_R(v_i)$.
This is only for convenience for analysis purposes, 
and we actually use the latest arrivals to fill gaps.

\begin{proposition}\label{prop:juggle}
The order among the evacuees within a cluster can be rearranged arbitrarily
without affecting the cost of the cluster.
\end{proposition}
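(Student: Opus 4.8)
The plan is to show that the cost of a cluster is a functional of its \emph{temporal profile} alone --- the flow rate plotted against time --- and is therefore blind to which individual evacuee is routed into which time slot. Since reordering the evacuees within a cluster merely permutes the assignment of evacuees to slots while leaving the profile intact, the cost must be unchanged.

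First I would recall that the cost of a cluster is the sum of the costs of its constituent sections, and that by \eqref{eqn:blockCost} the cost of a section is $\lambda t_0 + \lambda^2/(2c)$, which depends only on the section's height $c$, its offset $t_0$, and the number $\lambda$ of evacuees it carries. None of these three quantities refers to the identity or the origin vertex of any particular evacuee: $c$ is the flow rate over the section's time interval, $t_0$ is the start time of that interval, and $\lambda = c\delta_t$ is its area. Thus the whole cluster cost is a function of the ordered list of section parameters $(c, t_0, \lambda)$, i.e.\ of the shape of the flow-rate curve. Equivalently, because every evacuee begins to move at time $0$, the evacuation time of each evacuee equals the instant at which it reaches the destination, so the cluster cost is exactly the aggregate arrival time $\int t\, f(t)\,dt$, where $f(t)$ is the flow-rate profile; this integral manifestly involves $f$ and nothing else. (A one-line check confirms that for a single section this integral reproduces $\lambda t_0 + \lambda^2/(2c)$.)

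Next I would argue that a reordering of the evacuees within the cluster keeps the amount of flow passing at each instant the same, hence preserves $f$, every section's triple $(c,t_0,\lambda)$, and therefore the cost; in particular, filling gaps with backlogged or latest-arriving evacuees rather than with the ``natural'' arrivals is cost-neutral. The one point that needs care --- and the main obstacle --- is justifying that a reordering genuinely preserves the profile $f$: one must verify that the section heights, durations, and offsets are determined by the aggregate flow together with the edge capacities alone, so that permuting the evacuee-to-slot assignments cannot create, destroy, or reshape a section. Once this invariance of $f$ is pinned down, the conclusion is immediate, since both the closed form $\sum_s\bigl(\lambda_s t_{0,s} + \lambda_s^2/(2c_s)\bigr)$ and the integral form $\int t\, f(t)\,dt$ depend on nothing but $f$.
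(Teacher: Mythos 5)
Your argument is correct and rests on the same underlying fact as the paper's, but the two proofs are pitched at different levels. The paper's proof is a two-line transposition argument on individual evacuees: if $e_a$ and $e_b$ take $t_a$ and $t_b$ to reach the destination, interchanging their positions leaves their joint contribution $t_a+t_b$ unchanged, and composing transpositions gives the general permutation. You instead show that the cluster cost is a functional of the flow profile $f$ alone, via the identity $\int t\,f(t)\,dt=\sum_s\bigl(\lambda_s t_{0,s}+\lambda_s^2/(2c_s)\bigr)$, and then observe that a relabeling preserves $f$. The ``main obstacle'' you flag --- that a reordering might create, destroy, or reshape a section --- dissolves once you note that evacuees are indistinguishable units of flow: the rearrangement contemplated in the proposition permutes evacuees among the time slots of an already-determined cluster, so $f$, and with it every triple $(c_s,t_{0,s},\lambda_s)$, is preserved by definition, and all downstream ceiling and gap-filling operations see only the aggregate profile and the edge capacities, never the identities. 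What your version buys is a cleaner justification for the paper's subsequent remark that gaps may be filled with the latest arrivals rather than the backlogged ones; what the paper's version buys is brevity.
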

\begin{proof}
Consider evacuee $e_a$ (resp. $e_b$) who takes $t_a$ (resp. $t_b$) time to arrive at a destination.
If we interchange the positions of these evacuees,
the sum of their contributions to the total cost remains the same at $t_a+t_b$.  
\end{proof}

If the height of an arriving section at $v_i$ is less than $c_{i-1}$ (the capacity of the exit edge from $v_i$),
then we use the underutilized capacity to accommodate as many of the delayed evacuees
as possible,
together with the evacuees carried by the section.
Based on Proposition~\ref{prop:juggle},
we fix the beginning of each arriving section at its original position in the time sequence.

\begin{example}
Let $\alpha_{R}(v_i)$ consist of sections, $S_1, S_2, \ldots$.
Fig.~\ref{fig:sections2} illustrates a situation where the heights of some arriving sections
are higher than $c_{i-1}$.
\begin{figure}[ht]
\centering
\subfigure[]{\includegraphics[height=20mm]{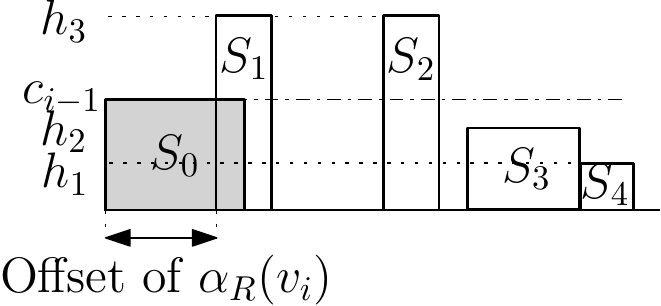}}
\hspace{80mm}
\subfigure[]{\includegraphics[height=14mm]{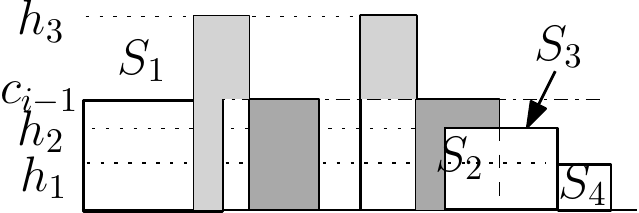}}
\hspace{3mm}
\subfigure[]{\includegraphics[height=14mm]{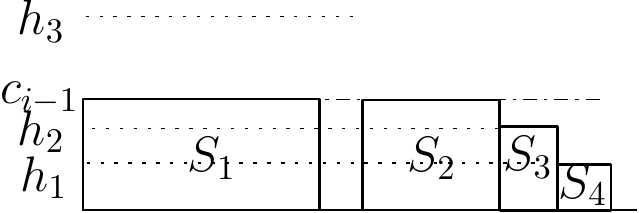}}
\caption{(a) $\alpha_{R}(v_i)$;
(b) Amount equal to the light gray parts fill the gray parts; (c) Result.
}
\label{fig:sections2}
\end{figure}
Fig.~\ref{fig:sections2}(a) shows that the evacuees at $v_i$ leave
$v_i$ at the rate of $c_{i-1}$,
forming section $S_0$,
and the first evacuee carried by the first section $S_1$ from $v_{i+1}$
arrives at $v_i$ before $S_0$ ends.
\qed
\end{example}

The above example shows that the following two situations can arise,
when $\alpha_{R}(v_i)$ is converted into $\beta_{R}(v_i)$.
\subsubsection{Observations}
\begin{enumerate}
\item[(a)]
A stretched section, due to ceiling, ends in a gap.
(Fig.~\ref{fig:sections2}(b) shows that the overlapping part of $S_0$ and $S_1$,
as well as the part of $S_1$ that exceeds capacity $c_{i-1}$ (shown in light gray)
are merged into one section (renamed $S_1$),
and get stretched in time,
partially filling the gap before $S_2$ in Fig.~\ref{fig:sections2}(a).)
\item[(b)]
A section may shrink due to the stretched section preceding it,
with its start time pushed to a later time.
(In Fig.~\ref{fig:sections2}(b) and (c), the stretched $S_2$ ``swallows'' a part of $S_3$ and $S_3$ shrinks.
The next section (such as $S_4$ in this example),
if any,
undergoes no change,
since its height is less than $c_{i-1}$.) 
\qed
\end{enumerate}

\section{1-sink problem}\label{sec:1sink}
In this section, we first show how to compute the arrival and departure sequences
$\{\alpha_R(v_i), \beta_R(v_i), \alpha_L(v_i), \beta_L(v_i)\mid i=1,\ldots,n\}$ efficiently.
Based on them,
we compute the extra and intra costs at each vertex,
which provide the costs $\{\Phi_L(v_i) + \Phi_R(v_i)  \mid 1\le i\le n\}$.
Finally, a 1-sink is found by looking for a point $x$ that minimizes
$\Phi_L(x) + \Phi_R(x)$.

\subsection{Computing section sequences}\label{sec:sectionSeq}
From Observations (a) and (b) above,
we can easily infer the following lemmas.
\begin{proposition}\label{prop:sectionCreation}
For $i=n, n-1,\ldots$,
the number of sections in $\beta_R(v_i)$ is at most
one more than that in $\alpha_R(v_i)$. 
\end{proposition}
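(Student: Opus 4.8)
The plan is to track precisely how the departure sequence $\beta_R(v_i)$ is produced from the arrival sequence $\alpha_R(v_i)$, and then to exhibit an injection from the sections of $\beta_R(v_i)$ into the sections of $\alpha_R(v_i)$ together with a single extra source section. Concretely, $\beta_R(v_i)$ is the output obtained by feeding the arrivals described by $\alpha_R(v_i)$, preceded by the $w_i$ evacuees initially resident at $v_i$, through the exit edge of capacity $c_{i-1}$: the server empties its content greedily at rate at most $c_{i-1}$, so the departing rate equals $c_{i-1}$ whenever a backlog is present and equals the current arrival rate (necessarily below $c_{i-1}$) whenever there is no backlog. The $w_i$ resident evacuees form the initial section $S_0$ of height $c_{i-1}$ at offset $0$; this is the only section of $\beta_R(v_i)$ with no counterpart in $\alpha_R(v_i)$, and it is exactly what accounts for the ``$+1$''.

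Next I would classify each maximal constant-rate piece of the departure profile by its left endpoint. Using Proposition~\ref{prop:juggle} I may reorder evacuees within a cluster freely, so it is legitimate to read the departures off this greedy server. Every left endpoint is of one of three kinds: (i) time $0$, which opens $S_0$; (ii) an instant where the rate rises to $c_{i-1}$, which the greedy rule forces to coincide with the onset of an arrival section of height at least $c_{i-1}$ (the moment a backlog starts forming); or (iii) an instant at which the departing rate takes a value below $c_{i-1}$ equal to the height of the unique arrival section then in service---this happens either when a backlog is exhausted strictly inside a short section, or at the onset of a short section following a gap or a preceding section. I would then map each departure section to $S_0$, to the tall arrival section that opens it, or to the short arrival section whose height it tracks, respectively.

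The core of the argument is to verify that this map is injective, which gives $|\beta_R(v_i)|\le|\alpha_R(v_i)|+1$. Each tall arrival section opens at most one backlog phase, since its height is constant and the departure therefore stays pinned at $c_{i-1}$ throughout it, so it is charged at most once; and each short arrival section carries at most one sub-$c_{i-1}$ departure section, since once any backlog at its start is cleared the departing rate equals its constant height until the section ends. This is precisely the behaviour recorded in Observations~(a) and~(b): after ceiling, a tall section stretches into the following gap and may merge with its neighbours but never splits, so it contributes one section rather than more, while a short section merely shrinks, its leading portion being swallowed by the preceding stretched section, so its residual yields the single departure section charged to it. The one delicate configuration is a short section that both receives backlog at its start and is exhausted before it ends (the type-(iii) split); the point to check is that the leading part is absorbed into the preceding $c_{i-1}$ section rather than forming a fresh one, which is exactly why the charge still lands on a distinct arrival section.

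I expect the main obstacle to be this last bookkeeping --- ruling out that any single arrival section spawns two departure sections. Once the greedy-server description is in place the case analysis is short, but one must be careful that a backlog created by one tall section and cleared only several sections later does not generate spurious section openings at the intermediate arrival boundaries; the greedy rule guarantees the rate stays at $c_{i-1}$ across those boundaries, so no new section opens there and injectivity is preserved. Summing over the at most $|\alpha_R(v_i)|$ arrival sections and the lone source $S_0$ then yields the stated bound.
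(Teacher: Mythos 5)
Your proof is correct, and it follows the same underlying intuition as the paper: the only genuinely new section in $\beta_R(v_i)$ is the one opened by the $w_i$ evacuees resident at $v_i$, while ceiling and gap-filling can only merge or truncate arriving sections, never split one into two. The paper itself offers no formal proof---it merely asserts that the proposition is ``easily inferred'' from Observations (a) and (b)---so your greedy-server description and the explicit injection (charging each departure section's left endpoint to $S_0$, to the onset of a tall arrival section, or to the unique short arrival section whose height it tracks) supply exactly the bookkeeping the paper leaves implicit; in particular your check that a short section receiving backlog at its start cannot be charged twice is the step the paper's observations gloss over.
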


\begin{proposition}\label{prop:clusterHeights}
For each $i~(2\le i\le n)$,
the heights of the sections in $\beta_R(v_i)$,
due to the evacuees on subpath $P[v_i,v_n]$,
are non-increasing with time.
\end{proposition}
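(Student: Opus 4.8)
The plan is to argue by induction on $i$, building the sequences in the order $i=n,n-1,\ldots,2$ exactly as they are generated, with inductive hypothesis that $\beta_R(v_{i+1})$ has non-increasing section heights. The key preliminary observation is that $\alpha_R(v_i)$ is obtained from $\beta_R(v_{i+1})$ merely by a time translation: the flow leaving $v_{i+1}$ leftward already respects the capacity $c_i$ of $e_i$, so traversing $e_i$ only delays it without altering its rate profile. Hence $\alpha_R(v_i)$ inherits the non-increasing property and every one of its section heights is at most $c_i$. The base case $i=n$ is immediate, since $\alpha_R(v_n)=\Lambda$ and $\beta_R(v_n)$ is the single section $S_0$ formed by the $w_n$ local evacuees leaving at rate $c_{n-1}$.

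For the inductive step I would pass to the cumulative-flow (single-server queue) description underlying Observations~(a) and (b). Writing $c=c_{i-1}$ for the exit capacity, let $A(t)$ be the cumulative number of evacuees ready to leave $v_i$ by time $t$ (the $w_i$ local evacuees, present as an initial backlog at $t=0$, together with the arrivals of $\alpha_R(v_i)$), and let $D(t)$ be the cumulative departures under the greedy rule $D'(t)=c$ whenever the backlog $Q(t)=A(t)-D(t)$ is positive and $D'(t)=A'(t)$ otherwise. The section heights of $\beta_R(v_i)$ are exactly the positive values of $D'$. Two facts drive the argument: $S_0$ has height $c$, which is the largest height that can occur since $D'\le c$; and, because the arrival heights are non-increasing, there is a threshold time beyond which every arriving section has height at most $c$.

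The crux is then to read $D'$ as a sequence of positive sections and show it never increases. I would split time at the first instant $t^\ast$ at which an arriving section of height at most $c$ begins; by the hypothesis all earlier active sections exceed $c$ and all later ones are at most $c$. For $t<t^\ast$ the arrival rate $A'(t)$ is either $0$ or strictly above $c$, so $Q>0$ forces $D'=c$ on active stretches and $D'\in\{0,c\}$ throughout, making every departure section there of height exactly $c$. For $t\ge t^\ast$ the arrival rate never again exceeds $c$, so once $Q$ reaches $0$ it remains $0$; thus $D'$ equals $c$ on a single backlog-clearing interval and thereafter copies $A'$, which is non-increasing and bounded by $c$. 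Concatenating, the departure heights read $c,\dots,c$ followed by a non-increasing tail that is dominated by $c$, which is the claim; this simultaneously recovers Observations~(a) and (b), the stretched height-$c$ section being precisely the backlog-clearing interval, which ends in a gap or where $Q$ first vanishes.

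The step I expect to be the main obstacle is the bookkeeping at the high-to-low transition: the backlog accumulated by the tall sections may take long enough to drain that several short arriving sections are entirely absorbed into the single height-$c$ departure interval, with only a later section emerging (shrunken, its start delayed). Verifying that this absorption never yields a height exceeding that of the preceding departure section---equivalently, that $Q$ cannot rebound once it has vanished in the low regime---is the one place where the non-increasing arrival hypothesis is truly essential, and I would state the monotonicity of $Q$ for $t\ge t^\ast$ carefully to close it.
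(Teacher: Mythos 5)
Your argument is correct, and it is in fact more than the paper provides: the paper states this proposition with no proof at all, merely remarking that it ``can easily be inferred'' from the informal Observations~(a) and~(b) about stretching and swallowing of sections during the ceiling/gap-filling step. Your induction on $i$ combined with the cumulative-flow (work-conserving queue) description --- $D'(t)=c_{i-1}$ while the backlog $Q$ is positive, $D'=A'$ otherwise --- is a faithful formalization of exactly that mechanism: the prefix of departure sections all saturate at height $c_{i-1}$ while tall arrivals or backlog persist, and once $Q$ vanishes in the regime where all arriving heights are at most $c_{i-1}$ (which exists because, by the inductive hypothesis, $\alpha_R(v_i)$ inherits non-increasing heights from $\beta_R(v_{i+1})$ via pure time translation), the departure profile simply copies the non-increasing tail of the arrival profile and $Q$ cannot rebound. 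The one point you flag as delicate --- that several short arriving sections may be wholly absorbed into the single backlog-clearing interval --- is handled correctly by the monotonicity of $Q$ for $t\ge t^\ast$, and your observation that the argument simultaneously recovers Observations~(a) and~(b) confirms you have captured the same structure the authors had in mind. No gap; if anything, this is the proof the paper omits.
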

We similarly define $\alpha_L(v_i)$ (resp. $\beta_L(v_i)$),
i.e., the arrival section sequence at 
(resp. departure section sequence from) $v_i$ from {\em L}eft, 
due the evacuees on $P[v_1, v_{i-1}]$ (resp. $P[v_1, v_i]$).
To compute $\alpha_R(v_i)$ and $\beta_R(v_i)$,
we start from vertex $v_n$.
It is easy to construct $\beta_R(v_n)$,
which consists of just a single section of height $c_{n-1}$ and duration $w_n/c_{n-1}$
that starts at time 0 (according to the local time at $v_n$),
and $\alpha_R(v_{n-1})$,
which is a gap (=offset) of length $d_{n-1}\tau$,
followed by $\beta_R(v_n)$ (according to the local time at $v_{n-1}$).
Let us consider $\alpha_R(v_i)$ and $\beta_R(v_i)$
for a general $i$, $1\le i < n$,
where $\alpha_R(v_n) = \beta_R(v_1) =\Lambda$.
There are $w_i$ evacuees who depart from $v_i$ for $v_{i-1}$ at the rate of $c_{i-1}$.
Their departure takes $w_i/c_{i-1}$ time,
and if 
$
w_i/c_{i-1} \le d_i\tau,
$
or
\begin{equation}\label{eqn:mergeCond1}
w_i/d_i\tau \le c_{i-1},
\end{equation}
then there is no interaction at $v_i$ between the cluster carrying the evacuees from $v_i$
and the first cluster in $\alpha_R(v_i)$.
In addition,
if $c_{i-1} < c_i$ then the heights (=flow rates) of some clusters in $\alpha_R(v_i)$
get reduced and their durations become stretched,
to become clusters in $\beta_R(v_i)$.
Fig.~\ref{fig:sections2} illustrates this situation.
If $w_i/d_i\tau< c_{i-1} $, on the other hand,
the first evacuee in $\alpha_R(v_i)$ arrives at $v_i$ when there is a backlog of 
evacuees from $v_i$,
still waiting for departure.

Let $\lambda(C)$ denote the total number of evacuees carried by cluster $c$.
For the $m^{th}$ cluster $C_m^i$ in $\alpha_R(v_i)$,
we record the value $\lambda(C_m^i)/t^i_m$,
where $t^i_m$ is the time difference between the start times of $C^i_m$ and
the next succeeding section $C^i_{m+1}$,
which equals $\tau$ times the distance between the first vertices of $C^i_m$ and $C^i_{m+1}$.
In what follows, we omit the superscript $i$ from these quantities,
since it will be obvious from the context.
If there is no succeeding section after $C_m$, we set $t_m=\infty$.
For each $m$,
we define the {\em critical capacity}
\begin{equation}\label{eqn:critical}
\lambda(C_m)/t_m.
\end{equation}

\begin{figure}[h]
\centering
\includegraphics[height=26mm]{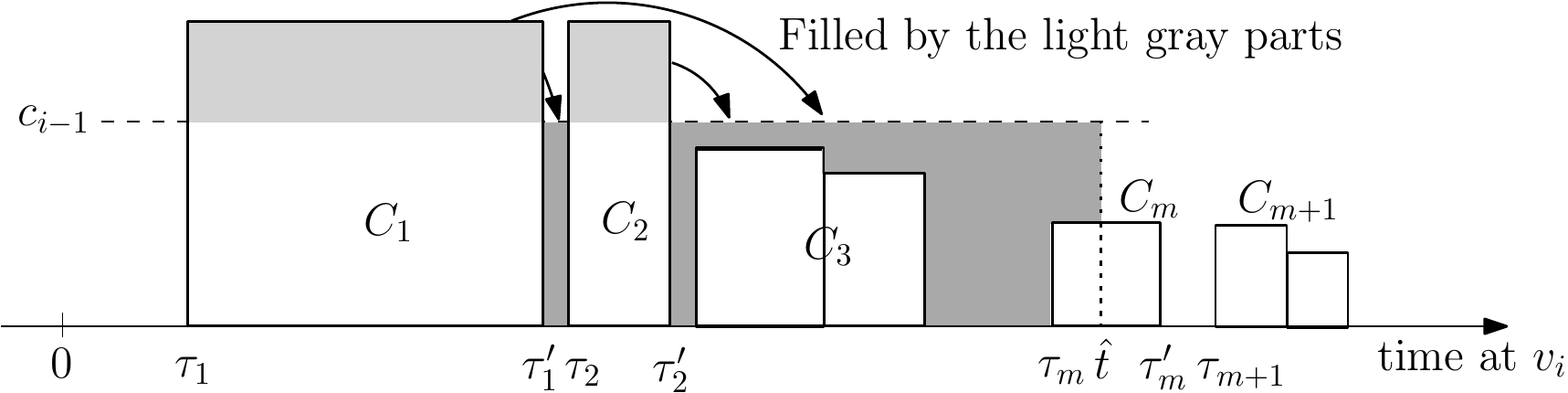}
\caption{The sum of the light gray areas equals the sum of the gray areas.
}
\label{fig:findCluster}
\end{figure}

Fig.~\ref{fig:findCluster} is similar to Fig.~\ref{fig:sections2},
except that it does not show cluster $C_0$ and contains many symbols for easy reference.
Our intention is to separate the ceiling operation from the gap-filling operation
for the backlog at $v_i$.
In the figure,
the start (resp. end) time of section $C_m$ is indicated by $\tau_m$ (resp. $\tau'_m$).
We thus have
\[
\tau_m = \sum_{j=1}^m t_j + \tau_1.
\]
The sum of the light gray areas above the capacity $c_{i-1}$ equals the sum of the gray areas,
that end in the middle of section $S_m$,
which is partially ``swallowed up'' by the gray area.
Let us now find the right end of the gray areas step by step.
Since $\lambda(C_1)/t_1>c_{i-1}$,
we merge $C_1$ and $C_2$, 
filling the gap between $C_1$ and $C_2$,
because the first evacuee carried by $C_2$ arrives at $v_i$
before the last evacuee carried by $C_1$ leaves $v_i$.

Now $C_1$ and $C_2$ have been merged into one section.
We repeat this process by remembering the ratio $(\lambda(C_1) + \lambda(C_2))/(t_1+t_2)$,
and so forth.
In general,
we find the smallest\footnote{It is not unique without the ``smallest'' condition.
So binary search cannot be used.}
$m$ such that $t=t_{m+1}$ satisfies
\begin{equation}\label{eqn:extent}
\sum_{j=1}^m\lambda(C_j)/\sum_{j=1}^m t_j \le c_{i-1}.
\end{equation}
If (\ref{eqn:extent}) still holds when $\sum_{j=1}^m t_j$ in it is replaced by
$\sum_{j=1}^m t_j + (\tau'_m-\tau_m)$,
then the gray area ends at $\hat{t}$,
where $\tau_m < \hat{t} \le \tau'_m$
(as in Fig.~\ref{fig:findCluster}),
otherwise $\tau'_m < \hat{t} \le \tau_{m+1}$.
In either case,
it is easy to determine $\hat{t}$ in (additional) constant time.

Once we determine the end of the gray area as above,
we continue with the ceiling operations for subsequent sections in $\alpha_R(v_i)$.
To facilitate it,
we introduce {\em max-heap} $\cal H$,
in which we place critical capacities of \eqref{eqn:critical},
precomputed at the time $\alpha_R(v_i)$ is constructed.
We can thus extract from $\cal H$ the largest ratio $\lambda(C_m)/t_m$.
As long as $\cal H$ contains a ratio larger than $c_{i-1}$,
we need to merge the corresponding sections.

\begin{lemma}\label{lem:secSequences}
We can compute section sequences $\{\alpha_R(v_i), \beta_R(v_i)\mid i=1,\ldots,n\}$
in $O(n\log n)$ time.
\end{lemma}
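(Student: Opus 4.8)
The plan is to realize the right-to-left sweep sketched above and to bound it by an amortized count of the sections that are created and destroyed. I process the vertices in the order $i=n,n-1,\ldots,1$, carrying along the current section sequence (stored as a doubly linked list, so that a shift or a merge is an $O(1)$ structural update) together with a single persistent max-heap $\mathcal{H}$ holding the critical capacities $\lambda(C_m)/t_m$ of \eqref{eqn:critical}. The base case $\beta_R(v_n)$ is the single section described in the text, whose critical capacity is placed in $\mathcal{H}$ in $O(\log n)$ time.

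At a general step I would first form $\alpha_R(v_i)$ from $\beta_R(v_{i+1})$ by prepending the offset gap $d_i\tau$; since this is a uniform time shift, every $t_m$, and hence every ratio already in $\mathcal{H}$, is unchanged, so no heap work is required. I then insert the section $S_0$ of the $w_i$ evacuees leaving $v_i$ at rate $c_{i-1}$ (one $O(\log n)$ insertion) and resolve the backlog interaction between $S_0$ and the first arriving section in $O(1)$, as described in the text. The ceiling against $c_{i-1}$ is then driven entirely by $\mathcal{H}$: while its maximum ratio exceeds $c_{i-1}$ I extract the offending section, fully merge it with its successor as in \eqref{eqn:extent} (the new section has the summed $\lambda$ and summed $t$, whose mediant is its new ratio), delete the two stale entries, and insert the merged entry. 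When the maximum ratio first drops to at most $c_{i-1}$, the cumulative condition \eqref{eqn:extent} holds and I locate the exact boundary $\hat t$ inside the last section in $O(1)$, leaving the sequence $\beta_R(v_i)$. Correctness is exactly the derivation surrounding \eqref{eqn:critical}--\eqref{eqn:extent}: a section must be ceiled precisely when its critical capacity exceeds $c_{i-1}$, and because pairwise merges along a run are confluent, processing the current maximum first yields the same merged section as the ``smallest $m$'' rule, while sidestepping the binary search excluded in the footnote.

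For the running time I would argue by amortization. By Proposition~\ref{prop:sectionCreation} each passage from $\alpha_R(v_i)$ to $\beta_R(v_i)$ adds at most one section, and the only genuinely new section at step $i$ is $S_0$; hence the total number of sections ever created over the whole sweep is $O(n)$. The only potentially costly operation at a single step is the run of full merges, but every full merge permanently destroys one section, so the number of merges summed over all steps is at most the number of creations, i.e.\ $O(n)$. A partial swallow only shrinks the last section of a run and creates nothing, so it is not miscounted; moreover, since the heights in $\beta_R(v_{i+1})$ are non-increasing (Proposition~\ref{prop:clusterHeights}), the arriving sections exceeding $c_{i-1}$ form a single prefix run, so the $O(1)$ boundary computation is performed only once per step. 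With each creation, deletion, extract-max, and insertion costing $O(\log n)$ on $\mathcal{H}$, the whole computation of $\{\alpha_R(v_i),\beta_R(v_i)\}$ runs in $O(n\log n)$ time.

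The step I expect to be the main obstacle is the amortized accounting rather than any individual calculation: one must confirm that a run of merges whose length is linear at a single vertex cannot recur, which holds because each full merge strictly and permanently lowers the global section count while only one new section enters per vertex, and that partial swallows are correctly excluded from the creation count. A secondary technical point is checking that the offset shift genuinely leaves all stored ratios intact and that the separate backlog gap-filling introduces no sections beyond those already counted, so that the $O(n)$ creation bound---and with it the $O(n\log n)$ total---is preserved.
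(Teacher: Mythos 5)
Your proposal is correct and follows essentially the same route as the paper's proof: a right-to-left sweep driven by a max-heap of critical capacities, with the mediant inequality justifying re-insertion of merged ratios and Proposition~\ref{prop:sectionCreation} giving the amortized $O(n)$ bound on merges, each at $O(\log n)$ heap cost. The one imprecision is your claim that the backlog interaction is resolved in $O(1)$ per vertex --- the backlog can spill across several gaps and swallow several sections, and the paper locates its end by an $O(\log n)$ binary search per vertex (or, as its footnote notes, an amortized sequential scan) --- but either way the $O(n\log n)$ total is unaffected.
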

\begin{proof}
For $i=n$ we initialize max-heap $\cal H=\emptyset$,
and we update it as we compute $\{\alpha_R(v_i), \beta_R(v_i)\}$ for $i=n,n-1,\ldots$.
so that we can easily find the largest $\lambda(C_m)/t_m$ from $\cal H$ in constant time
and compare it with $c_{i-1}$.
If $\lambda(C_m)/t_m> c_{i-1}$ 
then we merge $C_m$ and $C_{m+1}$.
Since $\lambda(C_m)/t_m\ge \lambda(C_{m+1})/t_{m+1}$,
the weight-time ratio for the combined section satisfies
\[
\lambda(C_{m+1})/t_{m+1} \le \{\lambda(C_m)+\lambda(C_{m+1})\} /(t_m+t_{m+1}) \le \lambda(C_m)/t_m. 
\]
We thus put $\{\lambda(C_m)+\lambda(C_{m+1})\} /(t_m+t_{m+1})$ in $\cal H$,
which may or may not be the largest item in $\cal H$.
Each insertion into heap $\cal H$ takes $O(\log n)$ time.
Then we process the next largest item in $\cal H$ and proceed with extending/merging.
Proposition~\ref{prop:sectionCreation} implies that only $O(n)$ new sections are created in total,
as $i$ changes from $n$ to $1$,
each incurring constant processing time.
Therefore,
the total time for creating all new sections is $O(n\log n)$,
taking the insertion/extraction operations into/from $\cal H$ into account.

We now go back to deal with the overlapping area (with backlog $w$)
between sections $C_0$ and $C_1$,
shown in Fig.~\ref{fig:sections2}(a).
Let $C_1, C_2,\ldots$, be the sections after ceiling by $c_{i-1}$.
To find the end time, $\bar{t}$, of the extended $C_1$,
we first determine, using binary search, the smallest $t$ that satisfies
\begin{equation}\label{eqn:test1}
(t - \tau_1)c_{i-1}  \ge w + \sum_{m=1}^l \lambda(C_m),
\end{equation}
where $\tau'_l\le t$. 
Once such a $t$ is found then it is straightforward to compute $\bar{t}$.
This gap-filling operation takes $O(\log n)$ time per vertex $v_i$,
and the total for all vertices is $O(n\log n)$.\footnote{Sequential search is even faster,
taking $O(n)$ time in total.
But this method is useful in solving the $k$-sink problem.
}
\end{proof}

\subsection{Computing $\Phi_R(v_i)$ and $\Phi_L(v_i)$}\label{sec:extraIntra}
We discussed above how to convert $\alpha_R(v_i)$ into $\beta_R(v_i)$,
which becomes $\alpha_R(v_{i-1})$ with an offset.
We want to compute cost $R(i,n) = E_i+I_i$ for $i=n, n-1,\ldots, 1$,
where $E_i$ (resp. $I_i$) is the extra cost  (resp. intra cost) of $\alpha_R(v_i)$,
defined after \eqref{eqn:blockCost}.\footnote{$L(1,j)$ can be computed similarly.}
It is clear that  $E_n=I_n=0$.
It is also easy to see that $E_{n-1} = d_{n-1} w_n\tau$
and $I_{n-1} =  w_n^2/2c_{n-1}$.
Define an array of weights by $W[i] \triangleq \sum_{j=i}^n w_j$,
which can be precomputed in $O(n)$ time for all $i$.
By definition, we have for $i=n-1,n-2, \ldots, 1$,
\begin{equation}
E_i = E_{i+1}+ d_iW[i+1] \tau. \label{eqn:gaph1}
\end{equation}
Thus computing $E_{i-1}$ from $E_i$ takes constant time.
\begin{lemma}\label{lem:task2a}
The extra costs, $\{E_i\mid 1\le i\le n\}$, can be computed in $O(n)$ time.
\end{lemma}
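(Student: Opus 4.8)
The plan is to read the complexity directly off the linear recurrence \eqref{eqn:gaph1}, which has already been established, by organizing the computation as a single backward sweep over the vertices that reuses the prefix-sum table $W$.

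First I would precompute the weight suffix-sums $W[i]=\sum_{j=i}^{n} w_j$ for every $i$. This takes $O(n)$ time via the backward recurrence $W[n]=w_n$ and $W[i]=W[i+1]+w_i$, as already noted in the text just before \eqref{eqn:gaph1}. I would then set the base case $E_n=0$ and sweep $i=n-1,n-2,\ldots,1$, at each step evaluating
\[
E_i = E_{i+1}+d_i W[i+1]\,\tau
\]
from \eqref{eqn:gaph1}. Each evaluation looks up the previously computed scalar $E_{i+1}$ and the tabulated value $W[i+1]$, performs one multiplication and one addition, and records $E_i$; hence it runs in constant time, and for $i=n-1$ it reproduces the stated value $E_{n-1}=d_{n-1}w_n\tau$.

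With $n-1$ constant-time updates, the sweep costs $O(n)$, and together with the $O(n)$ preprocessing of $W$ the total is $O(n)$, as claimed.

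There is essentially no obstacle here; the lemma is an immediate consequence of the already-derived recurrence. The only point that genuinely needs care is the justification of \eqref{eqn:gaph1} itself, namely that passing from $v_{i+1}$ to $v_i$ lengthens the offset of every section of $\alpha_R(v_i)$ by the same additional travel time $d_i\tau$, and thus raises the aggregate extra cost by $d_i\tau$ times the number $W[i+1]$ of evacuees originating on $P[v_{i+1},v_n]$ (the factor $W[i+1]$, rather than $W[i]$, being the subtle index to get right). Granting this recurrence, which the excerpt supplies ``by definition,'' the running-time bound follows at once.
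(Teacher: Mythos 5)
Your proof is correct and follows the same route as the paper: precompute the suffix sums $W[i]$ in $O(n)$ time and then apply the recurrence \eqref{eqn:gaph1} in a single backward sweep with constant work per vertex. The paper treats the recurrence as given ``by definition'' just as you do, so there is no substantive difference.
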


Let us now consider intra cost $I_i$,
which is more difficult to compute.
All the sections of the same height belonging to different clusters are said to form a {\em height group},
or just a {\em group} for short.
They are always consecutive.
Let $\alpha_R(v_i)$ consist of a set $\cal G$ of groups of sections.
The heights of the groups are decreasing from the first group to the last group in $\cal G$.
Suppose that a group $G$ of height $h(G)$ consists of sections $S_1, S_2, \ldots, S_g$,
and,
for $q=1,\ldots, g$,
let $\lambda(S_q)$ denote the sum of the weights of the vertices carried by $S_q$.
Then the sum of the intra costs of the sections in $G$ is given by
$I(G)= \sum_{q=1}^g \lambda(S_q)^2/2h(G)$,
and the total intra cost is $I({\cal G})= \sum_{G\in {\cal G}} I(G)$.
When sections in $\alpha_R(v_i)$ are merged due to a lower capacity $c_{i-1}$,
$I(G)$ clearly changes for some groups $G$.

During the construction of $\{\alpha_R(v_i),\beta_R(v_i)\mid i=n,n-1,\ldots, 1\}$,
we need to update the intra costs of the affected section groups.
To this end,
we add the squared weight of the newly stretched section and the squared weight of
the part not swallowed up totally ($C_m$ in Fig.~\ref{fig:findCluster}).
Then subtract the sum of the squared weights of the totally or partially swallowed up sections.
We maintain $\cal G$ with the sum of squared weight for each $G\in \cal G$.
Since a swallowed up section no longer contributes to the updating time,
the total computation cost is $O(n)$.
Thus the amortized updating time for the intra cost per section is constant.

Suppose that a group $G$ of sections have the same height $c$,
and their heights get reduced due to a new, smaller capacity $c_{i-1}~(<c)$.
Let them have weights $\lambda(S_1), \ldots, \lambda(S_g) ~(g\ge 2)$,
so that the sum of their intra costs on departure from $v_i$, i.e., in $\beta_R(v_i)$,
is given by
\begin{equation}\label{eqn:intraCost}
\frac{1}{2c_{i-1}} \{\lambda(S_1)^2+ \cdots + \lambda(S_g)^2\} = \frac{\hat{\lambda}_G}{2c_{i-1}},
\end{equation}
where
\begin{equation}\label{eqn:squaredWeights}
\hat{\lambda}_G =\lambda(S_1)^2 +\cdots + \lambda(S_g)^2.
\end{equation}
This implies that we can treat all the sections in $G$ as if they were just one virtual section
with effective intra cost $\hat{\lambda}_G/2c_{i-1}$.
This way,
we can maintain just the sum $\hat{\lambda}_G$ of squared weights for $G$.
This idea works fine if sections do not merge with each other.
Assume now that $m'~(m'\le g)$ sections do merge due to a lower capacity encountered.
It implies that $m'-1$ sections disappear.
We subtract the contributions of the $m'-1$ disappeared sections from $\hat{\lambda}_G$
and replace them by the squared weight of the resulting large section.
Then the new intra cost is given by $\hat{\lambda}_G/2c_{i-1}$.
Since a section can disappear no more than once,
the total computation cost of updating $\hat{\lambda}_G$ is $O(n)$.
Clearly the intra costs of the groups that are not stretched do not change,
so they incur no updating cost.
The above argument proves the following lemma.
\begin{lemma}\label{lem:intra}
The intra costs, $\{I_i \mid 1\le i\le n\}$, can be computed in $O(n\log n)$ time.
\end{lemma}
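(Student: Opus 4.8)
The plan is to piggyback the intra-cost computation on the construction of the section sequences from Lemma~\ref{lem:secSequences}, so that the additional bookkeeping contributes only $O(n)$ work on top of the $O(n\log n)$ already spent there. Concretely, I would sweep the vertices in the order $i=n,n-1,\ldots,1$, maintaining the current arrival sequence $\alpha_R(v_i)$ as its collection $\mathcal{G}$ of height groups, and for each group $G$ the quantity $\hat{\lambda}_G=\sum_q\lambda(S_q)^2$ of \eqref{eqn:squaredWeights}. By \eqref{eqn:intraCost} the intra cost of a single group of height $h(G)$ equals $\hat{\lambda}_G/2h(G)$, so the total intra cost is the running sum $I_i=\sum_{G\in\mathcal{G}}\hat{\lambda}_G/2h(G)$, which I keep updated incrementally rather than recomputing. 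The base cases $I_n=0$ and $I_{n-1}=w_n^2/2c_{n-1}$ are immediate, and since the offset that turns $\beta_R(v_i)$ into $\alpha_R(v_{i-1})$ only prepends a gap, it leaves every $\hat{\lambda}_G$ and $h(G)$ untouched, so the intra cost is invariant under that step.

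Next I would spell out the update performed when $\alpha_R(v_i)$ is converted into $\beta_R(v_i)$ by ceiling at capacity $c_{i-1}$ (and when the weight-$w_i$ section is inserted). Only the groups whose sections actually merge are affected, so it suffices to touch those. For each merge that fuses $m'$ sections into one, I subtract from $\hat{\lambda}_G$ the squared weights of the $m'-1$ sections that disappear, together with the squared weight of any partially swallowed section such as $C_m$ in Fig.~\ref{fig:findCluster}, and add back the squared weight of the resulting stretched section and of the shrunken remainder; I simultaneously reset $h(G)$ to $c_{i-1}$ and correct the running sum $I_i$ by the change in $\hat{\lambda}_G/2h(G)$. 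Each such update is constant time. By Proposition~\ref{prop:juggle}, the reshuffling of evacuees that occurs while filling gaps with the backlog does not alter any cost, so maintaining $\hat{\lambda}_G$ in this way remains faithful to \eqref{eqn:intraCost}.

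The amortization is the crux. By Proposition~\ref{prop:sectionCreation} the total number of sections ever created across all $i$ is $O(n)$, and each section can be fully swallowed, and hence disappear, at most once over the entire sweep; therefore the total number of merge/update operations is $O(n)$, giving $O(n)$ cumulative intra-cost bookkeeping. Groups whose heights are not reduced undergo no change and cost nothing. Adding this $O(n)$ to the $O(n\log n)$ section-sequence construction of Lemma~\ref{lem:secSequences} yields the claimed $O(n\log n)$ bound; the $\log n$ factor is inherited entirely from the heap operations there, not from the intra-cost maintenance itself.

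The step I expect to be most delicate is the treatment of the \emph{partially} swallowed section, whose right boundary $\hat{t}$ falls in its interior. Here one must argue that charging the update to the disappearance of the fully swallowed sections still yields $O(1)$ amortized cost per created section, even though the partial section survives with a changed weight and might be touched again at a later $i$; the point to verify is that such repeated touches are bounded by the $O(n)$ creations and merges rather than accumulating independently. Checking that the squared-weight update of \eqref{eqn:intraCost} reproduces the intra cost of $\beta_R(v_i)$ exactly after a partial merge, and that it then survives unchanged under the offset into $\alpha_R(v_{i-1})$, is the part where care is needed.
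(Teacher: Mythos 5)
Your proposal matches the paper's own argument essentially step for step: maintaining the per-group sums of squared weights $\hat{\lambda}_G$, updating them by subtracting the squared weights of swallowed sections and adding that of the merged/stretched section, charging the work to section disappearances for an $O(n)$ amortized total, and inheriting the $O(n\log n)$ bound from the section-sequence construction of Lemma~\ref{lem:secSequences}. The delicate point you flag about the partially swallowed section is handled in the paper at the same level of detail (at most one partial swallow per ceiling event, hence $O(n)$ overall), so no further comparison is needed.
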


Since $\Phi_R(v_i) = E_i+I_i$,
Lemmas~\ref{lem:secSequences}, \ref{lem:task2a},
and \ref{lem:intra} directly imply the following corollary.
\begin{corollary}\label{cor:costs}
We can compute $\{\Phi_R(v_i) \mid 1\le i\le n\}$ in $O(n\log n)$ time.
Similarly, we can compute $\{\Phi_L(v_i)  \mid 1\le i\le n\}$ in $O(n\log n)$ time.
\end{corollary}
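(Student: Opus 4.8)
The plan is to observe that $\Phi_R(v_i)=E_i+I_i$ for each $i$, so the corollary follows by assembling the three preceding lemmas with one final summation. Each $\alpha_R(v_i)$ carries exactly the evacuees on $P[v_{i+1},v_n]$ as they arrive at $v_i$, so its total cost—the extra cost plus the intra cost, as defined after \eqref{eqn:blockCost}—is precisely the cost contribution $\Phi_R(v_i)$. Hence it suffices to produce all the $E_i$ and all the $I_i$ within $O(n\log n)$ time and then add them pairwise.

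First I would run the right-to-left sweep of Lemma~\ref{lem:secSequences}, from $v_n$ down to $v_1$, which already builds the sequences $\{\alpha_R(v_i),\beta_R(v_i)\}$ in $O(n\log n)$ total time. Into this same sweep I would fold the extra-cost computation of Lemma~\ref{lem:task2a}: since the recurrence \eqref{eqn:gaph1} needs only the precomputed prefix-weight array $W[\cdot]$ and obtains $E_i$ from $E_{i+1}$ in constant time, it adds nothing to the asymptotic cost. Simultaneously I would maintain the intra-cost bookkeeping of Lemma~\ref{lem:intra}—keeping, for each height group $G$, the sum $\hat{\lambda}_G$ of squared section weights from \eqref{eqn:squaredWeights} and updating it whenever sections merge under a smaller exit capacity. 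Each such update is triggered exactly when a section is created or swallowed, so by the amortization argument already established a single pass yields every $I_i$ within the stated bound.

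Next I would set $\Phi_R(v_i)=E_i+I_i$ for all $n$ indices, an $O(n)$ step dominated by the $O(n\log n)$ terms; combining the bounds of Lemmas~\ref{lem:secSequences}, \ref{lem:task2a}, and \ref{lem:intra} gives $O(n\log n)$ for the collection $\{\Phi_R(v_i)\mid 1\le i\le n\}$. For $\{\Phi_L(v_i)\mid 1\le i\le n\}$ I would repeat the entire argument under the left-right reflection noted in Sec.~\ref{sec:cluster}, processing from $v_1$ to $v_n$, building $\alpha_L(v_i),\beta_L(v_i)$, and maintaining the mirror-image extra- and intra-cost quantities; all three lemmas transfer verbatim to the reflected instance, so the same bound holds. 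I do not expect a genuine obstacle, since the substance lives in the three lemmas; the only point requiring care is verifying that the extra- and intra-cost updates can legitimately share the sweep that constructs the section sequences—that is, that no subroutine reads data a later subroutine has already overwritten. Because every update is local (constant time for $E_i$, amortized constant for $\hat{\lambda}_G$) and fires precisely at the instant a section is formed or merged, this interleaving is sound, and the $O(n\log n)$ bound is achieved in a single pass for each direction.
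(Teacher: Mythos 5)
Your proposal is correct and matches the paper's own argument: the corollary is obtained exactly by writing $\Phi_R(v_i)=E_i+I_i$ and combining Lemmas~\ref{lem:secSequences}, \ref{lem:task2a}, and~\ref{lem:intra}, with the symmetric sweep for $\Phi_L$. The additional care you take about interleaving the extra- and intra-cost updates into the single right-to-left pass is consistent with how the paper's Algorithm~1 actually implements it.
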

We then compute $\min\{\Phi_L(v_i) + \Phi_R(v_i)  \mid 1\le i\le n\}$ in $O(n)$ additional time.
A vertex $v_i$ that achieves $\min\{\Phi_L(v_i) + \Phi_R(v_i)  \mid 1\le i\le n\}$ is an optimal 1-sink.
By Corollary~\ref{cor:costs},
we have
\begin{theorem}\label{thm:main3}
A minsum 1-sink in path networks (with general edge capacities)
can be found in $O(n\log n)$ time.
\end{theorem}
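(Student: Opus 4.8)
The plan is to reduce the continuous minimization of $\Phi(x)$ over all points $x\in P$ to a discrete minimization over the $n$ vertices, and then to evaluate the objective at each vertex using the cost arrays that the preceding lemmas allow us to precompute. First I would invoke Lemma~\ref{lem:atVertex}: since there is an optimal 1-sink located at a vertex, it suffices to compute $\Phi(v_i)=\Phi_L(v_i)+\Phi_R(v_i)$ for every $i$ and return a vertex attaining the minimum. This sidesteps the non-unimodality of $\Phi$ noted in the introduction, because we never have to search the interior of an edge.

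Next I would assemble the two cost arrays. By the definition of $\Phi_R$, the value $\Phi_R(v_i)$ equals the cost $R(i,n)=E_i+I_i$ of the arrival sequence $\alpha_R(v_i)$, i.e.\ the aggregate evacuation time of the evacuees on $P[v_{i+1},v_n]$ when they all move left to $v_i$. The machinery of Sec.~\ref{sec:sectionSeq} and Sec.~\ref{sec:extraIntra} produces these costs: Lemma~\ref{lem:secSequences} builds all sequences $\{\alpha_R(v_i),\beta_R(v_i)\}$ in $O(n\log n)$ time; Lemma~\ref{lem:task2a} computes all extra costs $\{E_i\}$ in $O(n)$ time via the incremental relation \eqref{eqn:gaph1}; and Lemma~\ref{lem:intra} computes all intra costs $\{I_i\}$ in $O(n\log n)$ time by maintaining the sums of squared section weights. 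Summing these, Corollary~\ref{cor:costs} yields $\{\Phi_R(v_i)\}$ in $O(n\log n)$ time, and a symmetric right-to-left sweep yields $\{\Phi_L(v_i)\}$ within the same bound.

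Finally I would add the two arrays pointwise to obtain $\Phi(v_i)$ for each $i$, and scan once for the minimizing vertex. Here I would note that the evacuees resident at the sink vertex itself incur zero cost, so that $\Phi_L(v_i)$ (covering $P[v_1,v_{i-1}]$) and $\Phi_R(v_i)$ (covering $P[v_{i+1},v_n]$) together account for every evacuee exactly once; hence $\Phi(v_i)=\Phi_L(v_i)+\Phi_R(v_i)$ is indeed the total cost with $v_i$ as sink. Both the pointwise addition and the scan take $O(n)$ time, so the running time is dominated by the $O(n\log n)$ cost of building the arrays, giving the claimed bound.

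The genuinely hard part of this result is not this final assembly but the two preceding ingredients that I may assume: that a vertex optimum exists (Lemma~\ref{lem:atVertex}), and that the intra costs can be computed efficiently (Lemma~\ref{lem:intra}). The subtlety in the latter is that as the capacity bound $c_{i-1}$ decreases, sections merge and stretch, so a per-vertex cost cannot be read off directly; the $O(n\log n)$ bound relies on the amortized observation that each section disappears at most once and on the max-heap $\cal H$ that identifies which sections still exceed the current capacity. Once those ingredients are established, the theorem follows immediately by combining them, and I expect the write-up to be little more than a one-line appeal to Corollary~\ref{cor:costs}.
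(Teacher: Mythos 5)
Your proposal matches the paper's proof: it reduces to vertices via Lemma~\ref{lem:atVertex}, obtains $\{\Phi_L(v_i)\}$ and $\{\Phi_R(v_i)\}$ in $O(n\log n)$ time from Corollary~\ref{cor:costs} (itself resting on Lemmas~\ref{lem:secSequences}, \ref{lem:task2a}, and \ref{lem:intra}), and finishes with an $O(n)$ scan for the minimizing vertex. This is exactly the argument the paper gives, so the proposal is correct and takes essentially the same approach.
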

We formally state our algorithm as Algorithm~1 
below.

\renewcommand\footnoterule{}  
\begin{floatTogether}
\begin{algorithm}[H]\label{alg:alg-1}
{\bf Inputs:} 
$\{w_i \mid 1\le i\le n\}$ and $\{d_i, c_i\mid 1\le i\le n-1\}$\;
{\bf Outputs:} 
Optimal sink $\mu^*= v_{i^*} \in V$;
Its cost $\Phi^*$\;
\BlankLine
$\Phi_R(v_n) \leftarrow 0$\;
$\lambda(C) \leftarrow w_n$; $E \leftarrow 0$; 
\tcp{\!\!Create simple cluster $C$ carrying $v_n$. $E$ is extra cost.}
$G \leftarrow \langle C\rangle;\; h(G) \leftarrow c_{n-1};\;
\hat{\lambda}_G \leftarrow\lambda(C)^2$; \hspace{2mm}\tcp{\!\!Create first group $G$ in $\beta_R(v_n)$
with height $h(G)$ and squared weight $\hat{\lambda}_G$.}
${\cal G} \leftarrow \langle G\rangle$;  \hspace{2mm}\tcp{\!\!$\cal G$
is a sequence of groups ordered by their heights.} 
 ${\cal H} \leftarrow w_{n-1}/d_{n-1}\tau$;  \hspace{2mm}\tcp{\!\!Max-heap ${\cal H}$
 contains critical capacities.}
\For {$i \in \{n-1, \ldots, 1\}$} {
  \While {$[{\cal H} \not= \emptyset] \wedge [c_{i-1} \le {\tt top}({\cal H})]$}{
    $h \leftarrow {\tt top}({\cal H})$; ~~\tcp{\!\!${\tt top}({\cal H})$ is max item in ${\cal H}$. It is
    removed.}
 	\If {$h= \lambda(C_m)/t_m$$^{(*)}$
    \footnotetext{$^{(*)}$  See \eqref{eqn:critical}. $C'_m$ is the cluster following $C_m$.}}{
    	Merge $C_m$ and $C'_m$ to form a new section $C_p$, and let $C'_p$ be the section following it\;
    	Ceil it by $c_{i-1}$, and put $\lambda(C_p)/t_p$ in ${\cal H}$, where $t_p$ is the time between
	the start times of $C_p$ and $C'_p$\; 
    	Update each affected group $G\in {\cal G}$, and recompute $h(G)$ and $\hat{\lambda}_G$;~~\tcp{\!\!As
	in Sec.~\ref{sec:extraIntra}.}
      } 
  } 
   $E \leftarrow E+W[i+1] d_i\tau$; $\Phi_R(v_i) \leftarrow E+\sum_{G\in {\cal G}} \hat{\lambda}_G/2h(G)$\;
  }  
Compute $\Phi_L(v_i)$ for $i \in \{1, \ldots, n\}$ similarly to $\Phi_R(v_i)$\;
$i^* \leftarrow \arg \min_{1 \le i \le n}(\Phi_L(v_i) + \Phi_R(v_i))$;
$\Phi^* \leftarrow  \Phi_L(v_{i^*}) + \Phi_R(v_{i^*})$.
\caption{\sc Minsum $1$-sink}
\end{algorithm}
\end{floatTogether}

\section{$k$-sink problem}\label{sec:ksink}
In this section,
we solve the $k$-sink problem,
which entails applying a number of somewhat sophisticated methods.

\subsection{DP formulation}\label{sec:formulation}
We first present a dynamic programming (DP) formulation that follows the template
of recursive functions proposed by Hassin and
Tamir~\cite{hassin1991} for the $p$-median problem.
Our innovation consists in the manner in which we process the recursive computations
efficiently,
given that the cost functions for the sink location problem are significantly
more difficult to compute than those for the regular median problem,
due to the time-dependent nature of the evacuee flow and congestion.

Let $F^k(j)$, $1\le k\le j \le n$, denote the minsum cost when $k$ sinks are placed 
on subpath $P[v_1,v_j]$.
Similarly, define $G^k(j)$, $1\le k\le j \le n$, as the minsum cost  when $k$ sinks are placed 
on subpath $P[v_1,v_j]$,
{\em and} $v_j$ is the rightmost sink.
We can start with $j= k+1$,
since $F^k(j)=G^k(j)=0$ for $j\le k$.

Let $R(i,j)$, $i\le j$ denote the cost of evacuating from {\bf R}ight to $v_i$
all the evacuees on $P[v_{i+1}, v_j]$.
Similarly, let $L(i,j)$, $i\le j$ denote the cost of evacuating from {\bf L}eft to $v_j$
all the evacuees on $P[v_i,v_{j-1}]$.
More generally,
let us define $F^p(j)$ and $G^p(j)$ for $1\le p\le k$.
By definition, we have 
\begin{eqnarray}
F^p(j) &=& \min_{p\le i \le j} \{G^p(i) + R(i,j)\}, \label{eqn:recursive1} \\
G^p(j) &=& \min_{p\le i \le j} \{F^{p-1}(i) + L(i+1,j)\}.\label{eqn:recursive2}
\end{eqnarray}
To initialize the above recursive computations,
we use $\{F^1(j) \mid 1\le j\le n\}$,
which can be computed from $\{\Phi_L(v_i), \Phi_R(v_i) \mid 1\le i\le n\}$ given
by Corollary~\ref{cor:costs}.\footnote{We could also start with $\{G^1(j)\mid 1\le j\le n\}$,
since $G^1(j)=\Phi_L(v_j)$,
which can be obtained from Corollary~\ref{cor:costs}.
}
We can then compute $G^2(j)$, using \eqref{eqn:recursive2},
and $F^2(j)$, using \eqref{eqn:recursive1}, and so forth.
We can thus compute $\{G^p(j),F^p(j)\mid p=2,\ldots, k\}$,
provided $R(i,j)$ and $L(i,j+1)$ are readily made available.
Moreover,
to obtain a DP algorithm with time complexity sub-quadratic in $n$,
we also need to quickly find the index $i$ that minimizes the recurrence relations
\eqref{eqn:recursive1} and \eqref{eqn:recursive2}.

Let us address the latter issue first.
We refer to the evaluation of \eqref{eqn:recursive1} and \eqref{eqn:recursive2} as {\em phase} $p$
computation.
To visualize our approach to computing \eqref{eqn:recursive1},
for a given phase $p~(2\le p \le k)$,
let us plot points $(G^p(i),R(i,j))$ in a 2-dimensional coordinate system
for all $i~(1\le i\le j)$,
for a fixed vertex $v_j$.
See Fig.~\ref{fig:R-G}.
\begin{figure}[ht]
\centering
\includegraphics[width=7.3cm]{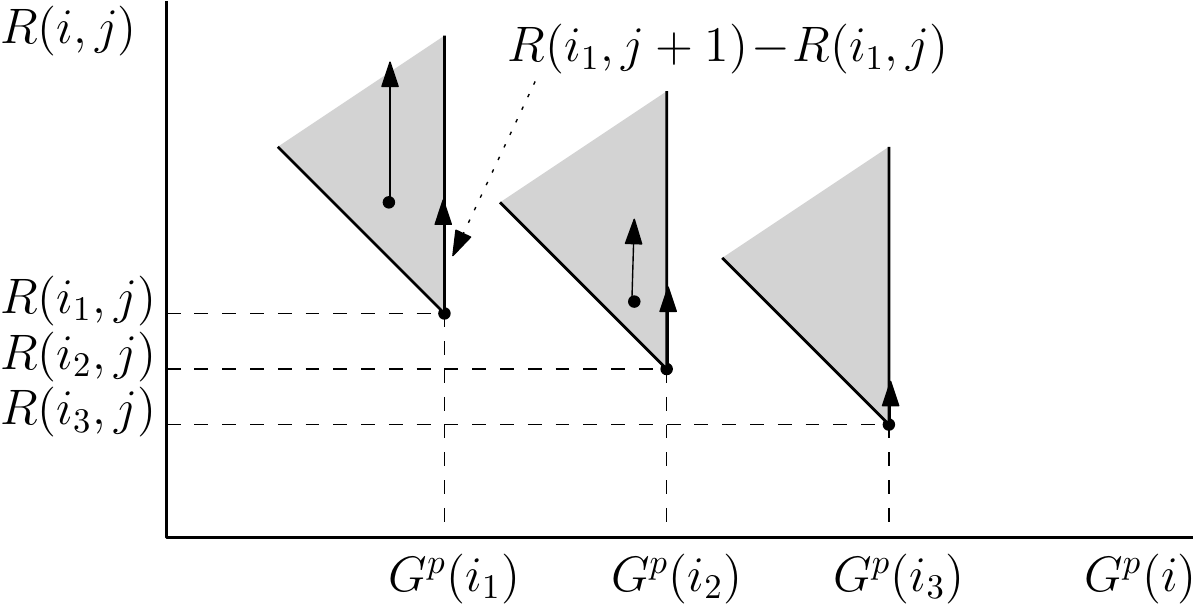}
\caption{$R(i,j)$ vs. $G^p(i)$ for fixed $j$.
}
\label{fig:R-G}
\end{figure}
If we superimpose the line represented by $G^p(i) + R(i,j)=c$ for a given value $c$
in the same coordinate system,
it is a $-45^{\circ}$ line.
If we increase $c$ from 0,
this line eventually touches one of the plotted points.
It is clear that the first point it touches gives the optimal value for $i$ that minimizes $G^p(i) + R(i,j)$.
In Fig.~\ref{fig:R-G},
this optimal is given by the point $(G^p(i_1), R(i_1,j))$,
hence $F^p(j) = G^p(i_1)+R(i_1,j))$.
For convenience, let us refer to point $(G^p(i), R(i,j))$ as {\em point $(i,j)$}.

We now explain that this representation provides us very useful information.
To see it,
for each point $(i,j)$,
define the {\em $V$-area} that lies above the $-45^{\circ}$ line and
to the left of the vertical line through it as shown as shaded areas in Fig.~\ref{fig:R-G}. 
We say that a point $(i',j)$ situated in the $V$-area of another point
$(i,j)$ is {\em dominated} by $(i,j)$, 
since the cost of point $(i',j)$ is higher than the cost of $(i,j)$.
In what follows, we identify a vertex $v_i$ with its index $i$,
and may say vertex $i$ to refer to $v_i$.
We sometimes say that $v_{i'}$ is dominated by $v_i$,
when $j$ is clear from the context.
Thus the points at the bottoms of the V-areas are the only non-dominated points.
For subpath $P[v_1,v_j]$ let $I^p[j] = \{i_1, \ldots, i_{g(j)}\}$,
where $i_1 < i_2 < \ldots < i_{g(j)}\le j$ and $\{(i,j)\mid i\in I^p[j]\}$
are the set of all points at the bottoms of the V-areas.
As observed above,
we have
\begin{proposition}\label{prop:Ii}
For $p=2,3,\ldots, k$,
$F^p(j) = G^p(i_1) + R(i_1, j)$ holds.
\end{proposition}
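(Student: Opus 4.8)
The plan is to reduce the claim to two facts: that the index $i^{*}$ attaining the minimum in \eqref{eqn:recursive1} corresponds to a non-dominated point, and that $G^{p}(i)$ is non-decreasing in $i$, so that index order agrees with the horizontal order of the plotted points in Fig.~\ref{fig:R-G}. Once both are available, the proposition drops out in one line.

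First I would dispose of the non-domination fact, which is immediate from the definitions. Let $i^{*}$ attain $F^{p}(j)=\min_{p\le i\le j}\{G^{p}(i)+R(i,j)\}$. If the point $(i^{*},j)$ were dominated by some $(i,j)$, then by the definition of the $V$-area the cost $G^{p}(i)+R(i,j)$ would be strictly smaller than $G^{p}(i^{*})+R(i^{*},j)=F^{p}(j)$, contradicting minimality. Hence $(i^{*},j)$ is one of the points at the bottoms of the $V$-areas, i.e. $i^{*}\in I^{p}[j]$, and in particular $i^{*}\ge i_{1}$.

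Next comes the step I expect to be the crux: establishing that $G^{p}(i)$ is non-decreasing in $i$. I would prove this together with the monotonicity of $F^{p}$ by induction on $p$, using \eqref{eqn:recursive2}. The base case $G^{1}(i)=\Phi_{L}(v_{i})$ is monotone because enlarging the left subpath only adds a vertex and lengthens every evacuee's travel to the rightmost sink. For the inductive step, the key sub-fact is that $L(i'+1,j)$ is non-decreasing in its second argument $j$: advancing the sink from $v_{j}$ to $v_{j+1}$ forces one more vertex to evacuate and can only increase every existing evacuee's distance and congestion delay. Consequently each term $F^{p-1}(i')+L(i'+1,j)$ of \eqref{eqn:recursive2} is non-decreasing in $j$ over the range $p\le i'\le j$ common to both $G^{p}(j)$ and $G^{p}(j+1)$, so the minimum over that range is at least $G^{p}(j)$; the only extra index available at $j+1$ is $i'=j+1$, which contributes $F^{p-1}(j+1)$ (since $L(j+2,j+1)=0$), and I would argue $F^{p-1}(j+1)\ge F^{p-1}(j)\ge G^{p}(j)$, the last inequality by inserting an extra sink at $v_{j}$ into an optimal $(p-1)$-sink placement. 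This gives $G^{p}(j+1)\ge G^{p}(j)$.

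Finally I would combine the two facts. Since $i^{*}\ge i_{1}$ and $G^{p}$ is monotone, $G^{p}(i^{*})\ge G^{p}(i_{1})$. If this inequality is strict, then $(i^{*},j)$ lies strictly to the right of $(i_{1},j)$; because $(i_{1},j)$ is non-dominated, no point to its right has smaller cost, so $G^{p}(i_{1})+R(i_{1},j)\le G^{p}(i^{*})+R(i^{*},j)=F^{p}(j)$, and minimality of $F^{p}(j)$ forces equality, yielding $F^{p}(j)=G^{p}(i_{1})+R(i_{1},j)$. The only remaining case, $G^{p}(i^{*})=G^{p}(i_{1})$, is a degenerate tie in the horizontal coordinate, which I would handle by breaking ties in favour of the smallest index so that $i_{1}$ remains a valid minimizer. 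Thus the whole difficulty resides in the monotonicity of $G^{p}$ (equivalently, the alignment of index order with the geometric ordering of the points in Fig.~\ref{fig:R-G}) and in the careful treatment of such ties; granting these, the proposition is an immediate consequence of the non-domination of the optimal point.
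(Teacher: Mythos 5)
Your proposal is correct, and its core is the same geometric argument the paper relies on: the minimizer of $G^p(i)+R(i,j)$ must be a non-dominated point, and among the non-dominated points the one that is leftmost in the plot of Fig.~\ref{fig:R-G} has the smallest cost (any point weakly to its right with no larger cost would place it in that point's $V$-area). The paper, however, proves the proposition only by the one-line sweep-line observation preceding it, and you have correctly isolated the ingredient it silently uses: the statement names $i_1$, the non-dominated vertex of \emph{smallest index}, whereas the geometry singles out the non-dominated point of \emph{smallest $G^p$-value}; these coincide only if index order agrees with the horizontal order of the plotted points, i.e.\ only if $G^p(i)$ is non-decreasing in $i$ (the same alignment is tacitly assumed later, e.g.\ in the definition \eqref{eqn:switchPt} of switching points, which tests only the cost inequality). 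Your simultaneous induction on $p$ establishing monotonicity of $F^p$ and $G^p$ --- using that $L(i'+1,j)$ and $R(i,j)$ are non-decreasing in the endpoint of the subpath being evacuated, that the extra index $i'=j+1$ contributes $F^{p-1}(j+1)$ because $L(j+2,j+1)=0$, and that $G^p(j)\le F^{p-1}(j)$ directly from \eqref{eqn:recursive2} with $i=j$ --- closes this gap and is, if anything, more complete than what the paper offers; the monotonicity of $L$ and $R$ under path extension deserves a sentence about congestion (added evacuees behind cannot speed up those ahead), but that is routine in this model. The only loose ends are the strict-versus-weak inequalities in the definition of domination and the tie case $G^p(i^*)=G^p(i_1)$, which you flag and which any consistent tie-breaking rule disposes of. In short: same approach as the paper, executed with a supporting monotonicity lemma that the paper should arguably have stated.
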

Function $G^p(j)$ can be computed in a similar manner,
based on \eqref{eqn:recursive2}.
Since $i_{g(j)}\le j$, 
vertex $v_{j+1}$ is farther from $v_{i_s}$ than it is from $v_{i_t}$,
if $s<t$.
We thus have
\begin{equation}
R(i_s,j+1)-R(i_s,j) \ge R(i_t,j+1)-R(i_t,j) \mbox{~~ for~} s<t.
\end{equation}
The upward arrows in Fig.~\ref{fig:R-G} indicate the increase $R(i,j+1)-R(i,j)$
for different $i$'s.
Moreover,
if $(i,j)$ is dominated by $(i_s,j)$,
then point $(i,j')$ will also be dominated by $(i_s,j')$ for any $j'>j$. 
This implies that once it is determined that $(i,j)\notin I^p[j]$,
then $(i,j')$ will not belong to $I^p(j')$ for any $j'>j$.
We will discuss how to update $I^p[j]$ to $I^p[j+1]$,
as $j$ increases,
 in the next subsection. 

\subsection{Computing switching points}\label{sec:switchingPoint}
To compute $F^p(j)$ by Proposition~\ref{prop:Ii},
we maintain index set $I^p[j]$ of non-dominated vertices.
Let us denote by $x(i_{s-1},i_s)$,  $1 < s \le g(j)$,
the {\em switching point},
namely the leftmost vertex $j' ~(g(j) < j' \le n)$, if any, for which $i_s$ dominates $i_{s-1}$. 
If such an index does not exist, it means that
$i_s$ never dominates $i_{s-1}$ and therefore we need not remember $i_s$.
Formally, we define, for $s \ge 2$,
\begin{equation}\label{eqn:switchPt}
x(i_{s-1}, i_{s}) = 
\min \{j' >i_s \mid G^p(i_s)+R(i_s,j') \le G^p(i_{s-1})+R(i_{s-1},j')\}.
\end{equation}
Define a sequence of switching points
\begin{equation}\label{eqn:Xp}
X^p[j] \triangleq \begin{cases}
	&\langle x(i_1, i_2), \ldots,x(i_{g(j)-1},i_{g(j)})\rangle \text{ if }|I^p[j]|\ge 2\\
	&\Lambda \hspace{43mm}\text{ if }|I^p[j]|=1.
\end{cases}
\end{equation}
Maintaining $X^p[j]$ together with $I^p[j]$ allows us to update $I^p[j-1]$ easily to $I^p[j]$.
\begin{lemma}\label{lem:switchingPts}
The switching points in $X^p[j]~(\not=\Lambda)$ satisfy
\begin{equation}\label{eqn:switchPt2}
x(i_1, i_2) < x(i_2,i_3) < \cdots < x(i_{g(j)-1},i_{g(j)}).
\end{equation}
\end{lemma}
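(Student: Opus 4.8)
The plan is to treat each candidate index $i$ as defining a cost curve $\phi_i(j')\triangleq G^p(i)+R(i,j')$ over the future vertices $j'$, and to read the switching points off as the crossings of these curves. First I would record the single most important structural fact: for $s<t$ the difference $\phi_{i_s}(j')-\phi_{i_t}(j')$ is non-decreasing in $j'$. This is immediate by telescoping the marginal inequality $R(i_s,j'+1)-R(i_s,j')\ge R(i_t,j'+1)-R(i_t,j')$ already available for $s<t$, since the $G^p$ terms are constant in $j'$ and cancel from the increment. Consequently any two of these curves cross at most once, and always in the same direction: the smaller-index curve is cheaper for small $j'$ and the larger-index curve is cheaper for large $j'$. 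In particular each $x(i_{s-1},i_s)$ of \eqref{eqn:switchPt} is a genuine threshold --- $i_s$ beats $i_{s-1}$ for every $j'\ge x(i_{s-1},i_s)$ and $i_{s-1}$ beats $i_s$ for every $j'<x(i_{s-1},i_s)$.

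The heart of the argument is then a squeezing step. Suppose I can produce, for the middle index $i_s$, a vertex $j'_*$ at which $i_s$ is the strict minimizer among the curves, so that in particular $\phi_{i_s}(j'_*)<\phi_{i_{s-1}}(j'_*)$ and $\phi_{i_s}(j'_*)<\phi_{i_{s+1}}(j'_*)$. Applying monotonicity to the pair $(i_{s-1},i_s)$: since $i_s$ already beats $i_{s-1}$ at $j'_*$ and $\phi_{i_{s-1}}-\phi_{i_s}$ is non-decreasing, $i_s$ beats $i_{s-1}$ for all $j'\ge j'_*$, whence $x(i_{s-1},i_s)\le j'_*$. Applying it to $(i_s,i_{s+1})$ in the other direction: since $i_s$ beats $i_{s+1}$ at $j'_*$ and $\phi_{i_s}-\phi_{i_{s+1}}$ is non-decreasing, $i_s$ beats $i_{s+1}$ for all $j'\le j'_*$, whence $x(i_s,i_{s+1})>j'_*$. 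Combining gives $x(i_{s-1},i_s)\le j'_*<x(i_s,i_{s+1})$, which is exactly the strict inequality \eqref{eqn:switchPt2} for the consecutive triple, and iterating over $s$ yields the whole chain.

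The step I expect to be genuinely delicate is the hypothesis I leaned on: that every $i_s\in I^p[j]$ is the strict minimizer at some $j'_*\ge j$. Single crossing alone does not deliver this. One can exhibit staircase configurations --- points pairwise non-dominated at $j$ whose cost curves have pairwise monotone differences --- in which a middle index is forever leap-frogged, never attains the minimum, and makes its two flanking switching points coincide or even invert; so the monotone-difference bookkeeping of the first two steps is not by itself enough. What excludes such a situation is that the retained points $(G^p(i_s),R(i_s,j))$ lie in convex position, and the real content of the proof is to show that a point $i_s$ which is never a minimizer for any $j'\ge j$ would already be $V$-area-dominated at the single abscissa $j$, contradicting $i_s\in I^p[j]$. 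Translating ``never a minimizer for $j'\ge j$'' back into a domination at $j$ itself is where the quadrangle (Monge) structure of $R$ --- the same inequality that powered the monotone differences --- must be used in earnest, for instance by comparing $i_s$ against the chord joining $i_{s-1}$ and $i_{s+1}$ and exploiting that the rise rates $R(\cdot,j'+1)-R(\cdot,j')$ decrease in the index. This geometric convexity link, rather than the subsequent arithmetic, is the main obstacle.
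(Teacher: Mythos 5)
Your first two steps --- the telescoped monotone-difference inequality, the resulting single-crossing property, and the squeeze $x(i_{s-1},i_s)\le j'_* < x(i_s,i_{s+1})$ around a vertex $j'_*$ at which $i_s$ is the strict minimizer --- are exactly the paper's argument read in contrapositive: the paper assumes $x(i_b,i_c)<x(i_a,i_b)$ for $i_a<i_b<i_c$ and concludes that $i_b$ is never the minimizer for any $j'$ (before $x(i_a,i_b)$ it loses to $i_a$, and from $x(i_a,i_b)$ onward it has already lost to $i_c$), which is the same pair of one-sided threshold facts you establish. That part of your write-up is correct and is the whole mathematical content of the paper's proof.

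Where you part company with the paper is in how the ``delicate hypothesis'' is discharged, and here you have made the problem harder than it needs to be. You are right that the raw set of $V$-area-non-dominated points need not have ordered switching points, and that single crossing alone cannot force a middle index ever to attain the minimum. But the paper does not attempt to show that a never-minimizing index is already $V$-dominated at the single abscissa $j$ (your proposed convex-position detour), and it does not need to. The set $I^p[j]$ to which the lemma applies is the list maintained by Procedure {\sc Add-Vertex}, whose \textbf{while} loop deletes any index $i$ with $x(i,j)\le x(i^-,i)$ at the moment such a violation would arise; the ordering \eqref{eqn:switchPt2} therefore holds by construction, as a loop invariant. The only thing the lemma's proof must supply is the \emph{safety} of that deletion: an index whose two flanking switching points are out of order is never the minimizer for any $j'$, hence discarding it cannot change any $F^p(j')$ --- and that is precisely the contrapositive statement you already proved. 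So no geometric convexity of the retained point set is required; the ``gap'' you isolate is closed definitionally by the pruning rule rather than by further use of the Monge structure.
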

\begin{proof}
Assume for example that $x(i_b,i_c) < x(i_a,i_b)$ holds,
where $i_a<i_b<i_c$.
Then $i_b$ will never be an optimal vertex, 
because for large enough $j~(\ge x(i_a,i_b))$ which makes $v_{i_b}$ dominate $v_{i_a}$,
vertex $v_{i_c}$ already dominates $v_{i_b}$,
since $x(i_b,i_c) < x(i_a,i_b)$.
This implies that $i_b$ should not be in $I^p[j]$.
\end{proof}

Procedure {\sc Add-Vertex}($v_j$) updates $I^p[j-1]$ to $I^p[j]$ for any phase 
$p=2,\ldots, k$.
The {\bf while} clause is justified by Lemma~\ref{lem:switchingPts}.
\begin{procedure} \label{proc:proc-1}
    \SetKwInOut{Input}{Input Data}
    \SetKwInOut{Output}{Output}
\Input{$I^p[j-1]= \{i_1, \ldots, i_{g(j-1)}\}, X^p[j-1]$; $\{G^p(i) \mid 1\le i\le n\}$\;}
\Output{$I^p[j]$, $X^p[j]$; \tcp*{\!\!$j \ge 2$ assumed}}
\BlankLine
$I^p[j]\leftarrow I^p[j-1]$, $X^p[j]\leftarrow X^p[j-1]$\;
\If {$i_2$ exists and $x(i_1, i_2) \le j$}{
Remove $i_1$ from $I^p[j]$ and $x(i_1, i_2)$ from $X^p[j]$;~\tcp{\!\!$x(i_2, i_3) > j$ if $i_3$ exists.
}
}
Set $i$ to the last index in $I^p[j]$\;
\While{$i^-($index immedaiately before $i$ in $I^p[j])$ exists and $x(i,j)\le x(i^-,i)$}{
Remove $i$ from $I^p[j]$, and remove $x(i^-,i)$ from $X^p[j]$\;
Set $i=i^-$\;
}
Append $x(i,j)$ to $X^p[j]$\;
Append $j$ to $I^p[j]$.
\caption{Add-Vertex($v_j$)}
\end{procedure}
Let $t_X(n)$ denote the time needed to compute $x(i,i')$ for an arbitrary pair
$(i,i')$, $i < i'$.
\begin{lemma}\label{lem:Ij+1}
Running Procedure {\sc Add-Vertex}$(v_j)$ for all $j=2,\ldots, n$ takes $O(nt_X(n))$ time.
\end{lemma}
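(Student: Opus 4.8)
The plan is to prove the bound by an amortized (accounting) analysis of the loop that invokes {\sc Add-Vertex}$(v_j)$ for $j=2,\ldots,n$, charging every unit of work to an insertion into, or a deletion from, the index list $I^p$. First I would fix the data structure: $I^p$ and $X^p$ are kept as a single double-ended queue that is \emph{mutated in place} across successive calls, so that the nominal copy $I^p[j]\leftarrow I^p[j-1]$ in the first line costs $O(1)$ and the symbol $I^p[j]$ merely names the state of the deque after the $j$-th call. Under this convention every elementary operation performed by {\sc Add-Vertex} --- the front test and the single front removal, each back removal in the \textbf{while} loop, and the two appends --- is a constant-time deque operation, \emph{except} that evaluating the loop guard $x(i,j)\le x(i^-,i)$ and computing the switching point appended at the end each require one evaluation of $x(\cdot,\cdot)$, costing $t_X(n)$ by the definition of $t_X$.

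Next I would count the switching-point evaluations over the entire run. The front removal reuses the value $x(i_1,i_2)$ already stored in $X^p[j-1]$, so it triggers no fresh computation; the only new evaluations occur inside the \textbf{while} loop (one per back removal) and once for the final append. The crux is the usual monotone-stack invariant: index $j$ is appended to $I^p$ exactly once, during the $j$-th call, and once it is deleted --- from the front or from the back --- it is never reinserted, because the observation at the end of Sec.~\ref{sec:formulation} guarantees that a vertex absent from $I^p[j]$ stays absent from $I^p[j']$ for every $j'>j$. Consequently the total number of back removals over all $n$ calls is at most $n$, and the total number of appends is at most $n$, so the number of evaluations of $x(\cdot,\cdot)$ is $O(n)$. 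Adding the $O(n)$ constant-time deque operations, the whole run costs $O(n\,t_X(n))$.

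The step I expect to be the main obstacle is not the counting itself but the two structural facts it rests on: that the first line can be realized in $O(1)$ by in-place mutation rather than an honest copy (otherwise each call could cost $\Theta(|I^p|)$ and the bound would degrade to $O(n^2)$), and that a single front removal per call is sufficient. The latter follows from Lemma~\ref{lem:switchingPts}: since the switching points in $X^p$ are strictly increasing and are consumed left to right, the invariant ``$x(i_1,i_2)>j-1$ upon entry to the $j$-th call'' is preserved, so advancing $j$ by one can cause at most the leftmost pair to switch. Should that invariant ever fail, replacing the single front test by a front \textbf{while} loop leaves the accounting intact, since front removals are still bounded by $n$ in total; hence the $O(n\,t_X(n))$ bound is robust in either case.
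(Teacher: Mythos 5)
Your proof is correct and follows essentially the same route as the paper's: the paper's (much terser) argument likewise charges each evaluation of $x(\cdot,\cdot)$ to an insertion into or a removal from the index set, of which there are $O(n)$ in total since each vertex enters once and, once removed, never returns. Your additional care about realizing the copy in line~1 by in-place mutation and about the sufficiency of a single front removal fills in details the paper leaves implicit, but does not change the underlying accounting.
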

\begin{proof}
We need to compute $x(\cdot,\cdot)$ in Line 6 of {\sc Add-Vertex}($v_j$).
It is computed once whenever a vertex is entered or removed from an index set,
thus $O(n)$ times in total.
\end{proof}

Let $t_R(n)$ denote the time required to compute $R(i,j)$ 
for an arbitrary pair $(i,j)$.
Then we have
\begin{lemma}\label{lem:xii'2}
$t_X(n)=O(t_R(n) \log n)$.
\end{lemma}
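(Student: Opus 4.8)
The plan is to show that $x(i,i')$ can be located by binary search, each probe of which costs a constant number of evaluations of $R$. Recall from \eqref{eqn:switchPt} that $x(i,i')$ is the least $j' > i'$ satisfying $G^p(i') + R(i',j') \le G^p(i) + R(i,j')$, which I rewrite as the predicate $\pi(j')$: $R(i,j') - R(i',j') \ge G^p(i') - G^p(i)$. The right-hand side is a constant once $i$ and $i'$ are fixed, and the values $G^p(\cdot)$ are supplied to the procedure, so the only quantities that depend on $j'$ are $R(i,j')$ and $R(i',j')$.

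First I would establish that $\pi$ is monotone, i.e.\ that if $\pi(j')$ holds then $\pi(j'+1)$ holds. Set $D(j') \triangleq R(i,j') - R(i',j')$; it suffices to show $D$ is non-decreasing. Since $i < i'$, vertex $v_{j'+1}$ is farther from $v_i$ than from $v_{i'}$, so the spatial monotonicity of marginal evacuation costs already observed for $R$ (namely that $R(i_s,j+1)-R(i_s,j)\ge R(i_t,j+1)-R(i_t,j)$ whenever $i_s<i_t$) gives
\[
\bigl[R(i,j'+1) - R(i,j')\bigr] - \bigl[R(i',j'+1) - R(i',j')\bigr] \ge 0,
\]
that is, $D(j'+1) \ge D(j')$. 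Hence the set of $j'$ for which $\pi(j')$ holds is an up-set of $\{i'+1, \ldots, n\}$, and its minimum element $x(i,i')$ is exactly the threshold located by binary search.

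Second, I would bound the cost. Binary search over the $O(n)$ candidate values of $j'$ performs $O(\log n)$ probes, and evaluating $\pi(j')$ at one probe amounts to computing $R(i,j')$ and $R(i',j')$, each in time $O(t_R(n))$, together with an $O(1)$ comparison against the precomputed constant $G^p(i') - G^p(i)$. Therefore the total time is $O(t_R(n)\log n)$, which is the claimed bound on $t_X(n)$.

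The only subtle point is the monotonicity step: the inequality available in Sec.~\ref{sec:formulation} is stated for a unit increment $j \to j+1$, whereas I invoke it to compare $D$ across the whole search range. This is handled by telescoping the unit-increment inequality, so no additional structural assumption on $R$ is needed beyond what has already been observed. Once monotonicity is in hand the applicability of binary search is immediate, and the complexity bound follows directly.
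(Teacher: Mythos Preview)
Your proof is correct and follows the same approach as the paper: binary search over $j'\in\{i',\ldots,n\}$ with each probe costing $O(t_R(n))$. You are in fact more thorough than the paper, which simply asserts that binary search applies without explicitly invoking the monotonicity of $D(j')=R(i,j')-R(i',j')$; your derivation of this from the marginal-cost inequality in Sec.~\ref{sec:formulation} (and the telescoping remark) fills a gap the paper leaves implicit.
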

\begin{proof}
To compute $x(i,i')$,
we perform binary search for the pivot $j'$ (in \eqref{eqn:switchPt}) in the range $\{i',\ldots, n\}$.
Each probe requires evaluating $R(\cdot,\cdot)$.
\end{proof}
By the above lemma,
we are naturally interested in making $t_R(n)$ as small as possible,
which is the topic of Sec.~\ref{sec:Rij}.

\section{Algorithms}\label{sec:algorithm}
\subsection{Statement}\label{sec:statement}
Algorithm~2, 
given below,
computes the minsum $k$-sink.
It requires a procedure to compute $R(i,j)$,
which will be discussed in Sec.~\ref{sec:Rij}.
It maintains index list, $I^p[j]$, and a list of switching points, $X^p[j]$,
in each phase $p$.

\begin{lemma}\label{lem:main:ksink1}
The minsum $k$-sink in path networks with {\em general} edge capacities
can be computed in $O(kn \cdot t_X(n))$ plus preprocessing time.
\end{lemma}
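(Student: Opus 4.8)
The plan is to show that Algorithm~2 evaluates the two recurrences \eqref{eqn:recursive1} and \eqref{eqn:recursive2} over all $k$ phases while spending only $O(n\cdot t_X(n))$ time per phase, and to charge the one-time work separately as preprocessing. First I would pin down the order in which quantities are produced, so that every value is available when referenced. During preprocessing I compute $\{F^1(j)\mid 1\le j\le n\}$ from $\{\Phi_L(v_i),\Phi_R(v_i)\}$ supplied by Corollary~\ref{cor:costs}. Then, for $p=2,3,\ldots,k$ in turn, I first compute $\{G^p(j)\}$ from \eqref{eqn:recursive2} and afterwards $\{F^p(j)\}$ from \eqref{eqn:recursive1}. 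Since \eqref{eqn:recursive2} refers only to $F^{p-1}$ (the previous phase) and \eqref{eqn:recursive1} only to $G^p$ (already produced earlier in the current phase), this ordering guarantees all referenced entries are ready, and the desired minsum cost is read off as $F^k(n)$.

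Next I would bound a single sweep, say the one producing $\{F^p(j)\}$ for a fixed $p$. I sweep $j$ from small to large, maintaining the index set $I^p[j]$ and the switching-point list $X^p[j]$ incrementally by calling {\sc Add-Vertex}$(v_j)$. By Proposition~\ref{prop:Ii} the minimizer of \eqref{eqn:recursive1} is the first index $i_1\in I^p[j]$, so once the set is maintained I obtain $F^p(j)=G^p(i_1)+R(i_1,j)$ with a single evaluation of $R$. By Lemma~\ref{lem:Ij+1}, running {\sc Add-Vertex} over all $j$ in this sweep costs $O(n\cdot t_X(n))$; the $n$ read-offs add only $O(n\cdot t_R(n))$, which is subsumed because $t_R(n)=O(t_X(n)/\log n)$ by Lemma~\ref{lem:xii'2}. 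The companion sweep producing $\{G^p(j)\}$ is handled by the mirror-image construction: \eqref{eqn:recursive2} has the same structure as \eqref{eqn:recursive1} with $R(i,j)$ replaced by $L(i+1,j)$ and $G^p(i)$ by $F^{p-1}(i)$, and the same farther-from monotonicity holds for $L$, so the identical domination and switching-point machinery applies at cost $O(n\cdot t_X(n))$. Summing the two sweeps over the $k-1$ phases yields $O(kn\cdot t_X(n))$.

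Finally I would collect the preprocessing: the single $O(n\log n)$ invocation of Corollary~\ref{cor:costs} to get $\{\Phi_L(v_i),\Phi_R(v_i)\}$ and hence $\{F^1(j)\}$, together with whatever prefix arrays (such as $W[\cdot]$) the $R$- and $L$-evaluation routine of Sec.~\ref{sec:Rij} needs. Adding this to the per-phase bound gives the claimed $O(kn\cdot t_X(n))$ plus preprocessing time.

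I expect the main obstacle to lie in the second paragraph: arguing rigorously that the incrementally maintained set $I^p[j]$ really does contain the optimal index for every $j$ encountered, so that the single read-off sanctioned by Proposition~\ref{prop:Ii} is valid, and that the symmetric treatment of \eqref{eqn:recursive2} is legitimate. Both rest on the monotonicity of $R(\cdot,j)$ in $j$ (and its $L$-analogue) together with Lemma~\ref{lem:switchingPts}: these ensure that a vertex, once dominated, stays dominated, so nothing {\sc Add-Vertex} discards could later become optimal. Verifying this persistence-of-domination property for the $L$-recursion, where the geometry is reflected, is the step I would check most carefully.
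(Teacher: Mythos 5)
Your proposal is correct and follows essentially the same route as the paper's proof: both bound each phase by charging the $O(n)$ insertions/removals in $I^p[\cdot]$ (hence $O(n)$ evaluations of $x(\cdot,\cdot)$) at $t_X(n)$ each via Lemma~\ref{lem:Ij+1}, treat the $G^p$ sweep symmetrically, and sum over the $k$ phases. Your additional remarks on the evaluation order and on the read-off cost being subsumed by $t_X(n)$ only make explicit what the paper leaves implicit.
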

\begin{proof}
In each phase $p$ ($1\le p \le k$),
the most time consuming operation is the updating of $I^p[j]$ and $X^p[j]$.
Algorithm~2 performs $O(kn)$ iterations in Lines 2 and 4.
The {\bf for} loop (Line 7) invokes {\sc Add-Vertex}($v_j$) $O(n)$ times.
In these invocations, an element is added to $I^p[~]$ at most once,
so $I^p[~]$ cannot receive more than a total of $n$ elements.
Thus $x(\cdot, \cdot)$ is computed $O(n)$ times in each repetition of the
outer {\bf for} loop (Line 2),
costing $O(n \cdot t_X(n))$ time by  Lemmas~\ref{lem:Ij+1} and \ref{lem:xii'2}.
Similarly Line 9 costs $O(n \cdot t_X(n))$ per repetition of the outer {\bf for} loop.
The lemma follows from Lemma~\ref{lem:Ij+1},
since all other steps need less time.
\end{proof}

\setcounter{algocf}{1}
\renewcommand\footnoterule{} 
\begin{floatTogether}
\begin{algorithm}[H]\label{alg:alg-2}
    \SetKwInOut{Input}{Input}
    \SetKwInOut{Output}{Output}
\Input{$\{w_i\in V\}$; $\{c_i, d_i\mid e_i \in E\}$\;}
\Output{
Set $S^* \subseteq V$ of $k$ sinks;
Cost $\Phi^*$ of solution $S^*$\;}
\BlankLine
Compute $\{G^1(j)\mid j =1, \ldots, n\}$\tcp*{\!\!$G^1(j)=\Phi_L(v_j)$.}
\For {$p \in \{1, \ldots, k\}$} { 
  $I^p \leftarrow \langle1\rangle; X^p \leftarrow \Lambda$\tcp*{\!\!Initialize.}
  \For {$j \in \{1, \ldots, p\}$} {
  $F^p(j) =0$\; 
  }
  \For {$j \in \{p+1, \ldots, n\}$} {  
        Invoke {\sc Add-Vertex}($v_j$)\;
    Set $F^p(j) \leftarrow G^p(i_1) + R(i_1, j)$\;
  } 
  \eIf {$p < k$}{
     Compute $\{G^{p+1}(j)\mid1 \le j \le n\}$ in a similar way using $\{F^p(j)\mid1 \le j \le n\}$}{
     \Return $\Phi^* = F^k(n)$ \!\! \tcp*{\!\!Sink set $S^*$ can be obtained from $\Phi^*$
     by isolating a maximum subpath with a sink with cost $\le \Phi^*$ at a time from left.}
  }
}
\caption{\sc Minsum $k$-sink algorithm}
\end{algorithm}
\end{floatTogether}

\subsection{Computing $R(i,j)$ and $L(i,j)$}\label{sec:Rij}
Line 6 of Algorithm~2 uses $R(i,j)$,
based on Eq. \eqref{eqn:recursive1},
and Line 12 uses $L(i,j)$,
based on Eq.\eqref{eqn:recursive2}.
We only discuss below how to compute $R(i,j)$,\
which takes $t_R(n)$ time by definition,
since $L(i,j)$ can be computed in similar time, $t_L(n)$.

\subsubsection{Preprocessing}
~

As preprocessing before computing $R(i,j)$,
we construct a binary tree $\cal T$,
called the {\em cluster sequence tree},
whose leaves are the vertices of $P$,
arranged from $v_1$ to $v_n$.
For node\footnote{We use the term ``node'' for $\cal T$
to distinguish them from the vertices of $P$.  The leaf nodes of
$\cal T$ are the vertices of $P$.
}
$u$ of $\cal T$ ,
let $v_L(u)$ (resp. $v_R(u)$) denote the leftmost (resp. rightmost) vertex
that belong to subtree ${\cal T}(u)$.
We say that $u$ {\em spans} subpath $P[v_L(u),v_R(u)]$.
Let $\alpha_R([v_i,v_j])$ (resp. $\beta_R([v_i,v_j])$) denote the arrival section sequence
from {\bf R}ight at
(resp. departure section sequence from) vertex $v_i$, moving left,
carrying the evacuees initially located at the vertices on $P[v_{i+1},v_j]$ (resp. $P[v_i,v_j]$).
Define
\begin{align}
\alpha_R(u) &\triangleq\alpha_R([v_L(u),v_R(u)]), \nonumber\\
\beta_R(u) &\triangleq\beta_R([v_L(u),v_R(u)]).\nonumber
\end{align}
At each node $u$ of $\cal T$,
we store $\beta_R(u)$.
We call two nodes $u_a$ and $u_b$ of $\cal T$ {\em adjacent} if $v_R(u_a)$ and $v_L(u_b)$
are adjacent vertices on $P$.
\begin{proposition}\label{prop:lastCluster}
Let $u_a$ and $u_b$ be two adjacent nodes of $\cal T$.
The evacuees still left at vertex $v_L(u_a)$ when the first evacuee in $\beta_R(u_b)$ arrives there
belong to the last cluster in $\beta_R(u_a)$.
\end{proposition}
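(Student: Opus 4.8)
The plan is to turn the statement into a timing inequality and then exploit the fact that a genuine gap between two clusters forces the later cluster to arrive under free flow. Write $v_L(u_a),v_R(u_a),v_L(u_b)$ for the relevant endpoints, recalling that $v_R(u_a)$ and $v_L(u_b)$ are adjacent, and let $t^{*}$ be the instant at which the first evacuee of $\beta_R(u_b)$ reaches $v_L(u_a)$. First I would observe that this evacuee leaves $v_L(u_b)$ at time $0$ (the evacuees initially sitting at $v_L(u_b)$ depart immediately) and, being the very first evacuee of the isolated $u_b$-flow, meets no congestion along the way; since an edge capacity caps only the flow \emph{rate} and never the speed of a leading evacuee, it travels at free speed, so $t^{*}=d(v_L(u_a),v_L(u_b))\tau$. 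Now label the clusters of $\beta_R(u_a)$ as $D_1,\dots,D_r$ in time order. By the definitions of cluster and gap, consecutive clusters are separated by an interval during which $v_L(u_a)$ carries no $u_a$-evacuee at all, so the evacuees waiting at $v_L(u_a)$ at any instant belong to a single cluster, namely the one currently departing. Thus the claim reduces to showing that by time $t^{*}$ every cluster except $D_r$ has finished departing; writing $f_{r-1}$ for the instant at which $D_{r-1}$ clears (relevant only when $r\ge 2$), it suffices to prove $f_{r-1}<t^{*}$.

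The heart of the argument is to bound the arrival time $a_r$ at $v_L(u_a)$ of the head evacuee of $D_r$, which originates at some vertex $v_{h_r}\preceq v_R(u_a)$. Because $D_r$ is preceded by a genuine flow gap, its head must arrive \emph{fresh}: I would argue that any delay it suffered, whether queueing at an intermediate vertex or waiting at its own origin, would be accompanied by a continuous stream of predecessors travelling the same leftward path and reaching $v_L(u_a)$ immediately ahead of it. A ceiling operation only stretches a stream and never tears a gap into it (cf. Observations (a) and (b)), so such a predecessor stream would arrive contiguously just before $a_r$, contradicting the gap that separates $D_r$ from $D_{r-1}$. Hence the head of $D_r$ departs $v_{h_r}$ at time $0$ and runs at free speed, giving $a_r=d(v_L(u_a),v_{h_r})\tau$. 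Since $v_{h_r}\preceq v_R(u_a)\prec v_L(u_b)$ and the edge joining $v_R(u_a)$ to $v_L(u_b)$ has positive length, we obtain $a_r=d(v_L(u_a),v_{h_r})\tau\le d(v_L(u_a),v_R(u_a))\tau<d(v_L(u_a),v_L(u_b))\tau=t^{*}$.

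Finally, the gap before $D_r$ gives $f_{r-1}<a_r$, so chaining the inequalities yields $f_{r-1}<a_r<t^{*}$; therefore $D_1,\dots,D_{r-1}$ have all cleared $v_L(u_a)$ before $t^{*}$, and every $u_a$-evacuee still present there at $t^{*}$ belongs to $D_r$, as claimed. The case $r=1$ is immediate, since the only cluster is the last one. I expect the main obstacle to be the fresh-arrival step: making precise, purely from the cluster/gap semantics, that a flow gap observed at $v_L(u_a)$ propagates back to the source as a genuine lull, so that the head of $D_r$ cannot have been delayed by congestion and must travel at free speed from its origin. Once that is secured, everything else is routine bookkeeping with arrival and departure times.
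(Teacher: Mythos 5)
Your proof is correct and follows essentially the same route as the paper's: both arguments hinge on the observation that the head evacuee of the last cluster of $\beta_R(u_a)$ originates at a vertex strictly to the left of $v_L(u_b)$, departs it at time $0$, and therefore reaches $v_L(u_a)$ strictly before the first evacuee of $\beta_R(u_b)$, so all earlier clusters have already cleared. The paper compresses this into one line (comparing times at the head vertex $v$ and leaving the FIFO/gap bookkeeping implicit), whereas you spell out the supporting details, notably the ``fresh-arrival'' step via gap propagation.
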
\begin{proof}
Let the last cluster mentioned in the proposition start with the evacuees from vertex $v$.
Then clearly the first evacuee from $v$ in $\beta_R(u_a)$ must have left $v$
before the first evacuee from $v_L(u_b)$ arrives there.
\end{proof}

\begin{lemma}\label{lem:tau1}
We can construct ${\cal T}$ in $O(n\log n)$ time.
\end{lemma}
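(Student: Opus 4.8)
The plan is to realize $\cal T$ as a \emph{balanced} binary tree over the leaves $v_1,\dots,v_n$, so that its height is $O(\log n)$, and to compute the stored sequences $\beta_R(u)$ bottom-up with one merge per internal node. Building the tree skeleton takes $O(n)$ time. As preprocessing I would also compute prefix sums of the $d_i$ and build a range-minimum structure over the $c_i$, so that the travel time $d(p,q)\tau$ and the bottleneck capacity $c(p,q)$ of any subpath $P[p,q]$ can be returned in $O(1)$ time; this costs $O(n\log n)$ and is charged to the bound. The base case is immediate: for a leaf $v_i$, $\beta_R(v_i)$ is the single simple cluster of height $c_{i-1}$ and duration $w_i/c_{i-1}$.

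The core is the merge at an internal node $u$ whose left and right children $u_a,u_b$ span $P[v_L(u),v_m]$ and $P[v_{m+1},v_R(u)]$, with $v_m=v_R(u_a)$ and $v_{m+1}=v_L(u_b)$ adjacent. First I would transport $\beta_R(u_b)$, the departure sequence at $v_{m+1}$, leftward to $v_L(u)$. Transporting a section sequence across a subpath amounts to shifting it by $d(v_L(u),v_{m+1})\tau$ and then ceiling it \emph{once} by the bottleneck capacity $c(v_L(u),v_{m+1})$: ceiling by the intermediate edge capacities in turn has the same effect as ceiling by their minimum, since constant-rate bottlenecks compose into a single bottleneck of the minimum rate. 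Because $\beta_R(u_b)$ has non-increasing section heights (Proposition~\ref{prop:clusterHeights}), only the prefix of sections taller than $c(v_L(u),v_{m+1})$ is touched, collapsing into section(s) of that height, while the suffix is carried over unchanged. I would then merge this transported sequence with $\beta_R(u_a)$ at the exit rate $c_{L(u)-1}$ of $v_L(u)$. By Proposition~\ref{prop:lastCluster}, at the instant the first transported evacuee reaches $v_L(u)$ the only backlog present lies in the last cluster of $\beta_R(u_a)$; hence every cluster of $\beta_R(u_a)$ except the last is already final and is simply kept, and I only combine its last cluster with the transported sequence, ceiling the combined arrival by the single capacity $c_{L(u)-1}$ and filling gaps with the backlog as in Observations (a) and (b). Concatenating the finalized prefix with the result of this combination gives $\beta_R(u)$.

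For the running time I would bound both the size of each stored sequence and the cost of each merge. Since the two ceiling steps never increase the section count and the concatenation adds at most a constant, $|\beta_R(u)|\le |\beta_R(u_a)|+|\beta_R(u_b)|+O(1)$; with unit-size leaves (and consistently with Proposition~\ref{prop:sectionCreation}) this yields $|\beta_R(u)|=O(s)$ for a node $u$ spanning $s$ vertices. The merge at $u$ is a shift, two ceilings against a single fixed capacity, and a concatenation, each a single linear scan over the sequences involved, plus one $O(\log n)$ gap-filling step for the backlog as in Lemma~\ref{lem:secSequences}. Crucially, because each ceiling is against \emph{one} fixed capacity rather than a decreasing sequence of capacities, no priority queue is needed and the $O(\log n)$-per-section factor of Lemma~\ref{lem:secSequences} is avoided, so the work at $u$ is $O(|\beta_R(u_a)|+|\beta_R(u_b)|+\log n)=O(s+\log n)$. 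Summing over all nodes at a fixed level of $\cal T$ gives $O(n)$ (their spans partition $V$), and over the $O(\log n)$ levels gives $O(n\log n)$; adding the $O(n)$ tree construction and the preprocessing yields the claimed bound.

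The step I expect to be the main obstacle is securing the $O(s)$ per-node scan cost rather than $O(s\log n)$. This rests on two facts that must be argued carefully: that transporting a sequence across a subpath reduces to a single ceiling by the minimum capacity (so the prefix collapse is performed once, not once per intermediate edge), and that Proposition~\ref{prop:lastCluster} genuinely confines the interaction at $v_L(u)$ to the last cluster of $\beta_R(u_a)$, so that the left sequence is disturbed only at its tail and each ceiling is against a single fixed capacity amenable to a linear sweep.
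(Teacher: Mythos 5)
Your proposal is correct and follows the paper's plan at every structural step: bottom-up construction over a (balanced) cluster sequence tree, the same leaf base case, and a merge at each internal node that shifts $\beta_R(u_b)$, ceils it by the bottleneck capacity of the traversed subpath, and resolves the backlog against the last cluster of $\beta_R(u_a)$ (Proposition~\ref{prop:lastCluster}) via the binary search of \eqref{eqn:test1}. The one genuine difference is the implementation and accounting of the ceiling step: the paper extracts critical capacities from a max-heap and charges mergers to cluster disappearances, whereas you observe that ceiling against a single fixed capacity can be done by one left-to-right sweep (terminating early by Proposition~\ref{prop:clusterHeights}), giving $O(s+\log n)$ work at a node spanning $s$ vertices, hence $O(n)$ per level and $O(n\log n)$ overall. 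Your accounting is arguably the more transparent of the two, since it also cleanly absorbs the cost of writing out the $O(s)$ sections stored at each node, but it obliges you to justify explicitly that transporting a sequence across several edges is equivalent to a single ceiling by their minimum capacity; that claim does hold (section start times are fixed, weights are conserved, and a constant-rate discharge through a chain of bottlenecks is governed by the smallest one), and it is precisely what the paper uses implicitly when it ceils $\beta_R(u_b)$ in one shot by the single capacity $c(v_{i-1},v_L(u_b))$. Both routes yield the claimed $O(n\log n)$ bound.
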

\begin{proof}
We construct ${\cal T}$ bottom up.
For a leaf node $u$ (a single vertex $v_i$ of $P$) of $\cal T$,
we have $\alpha_R(u)=\emptyset$,
and $\beta_R(u)$ consists of a section of height $c(v_{i-1})$ and duration $w_i/c(v_{i-1})$.

Let $u_a$ and $u_b$ be the two child nodes of $u$ in ${\cal T}$,
and assume that $\beta_R(u_a)$ and $\beta_R(u_b)$ are already available.
Note that the first cluster of $\beta_R(u_b)$ arrives at $v_L(u_a)$ with
a delay of $d(v_L(u_a),v_L(u_b))\tau$.
Let $v_i = v_L(u_a)$, $v_l = v_R(u_a)$, and $u=u_b$ in Fig.~\ref{fig:merge5}.
Let $C$ be the last cluster of $\beta_R(u_a)$,
which starts with vertex $v$, 
and let $c=c(v_{i-1}, v_L(u_b))$.
\begin{figure}[h]
\centering
\includegraphics[height=29mm]{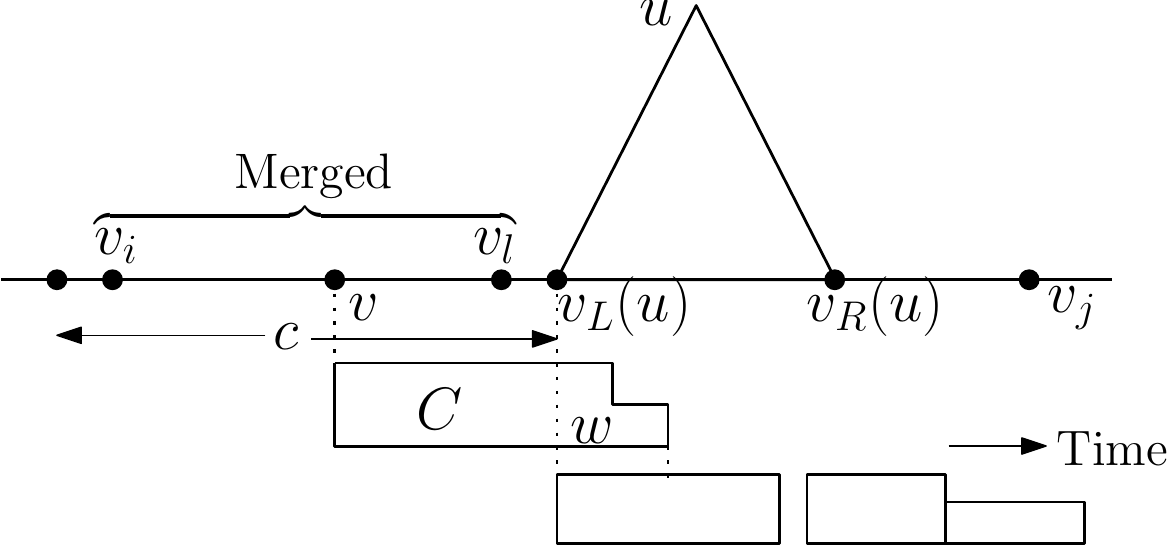}
\hspace{2mm}
\caption{Merging a new subtree ${\cal T}(u)$.
}
\label{fig:merge5}
\end{figure}
We first need to ceil $\beta_R(u_b)$ by $c$,
to merge $C$ and initial sections of $\beta_R(u_b)$,
assuming that they overlap with a backlog of $w$,
as shown in Fig.~\ref{fig:merge5}.\footnote{We can find the backlog in constant time,
using Proposition~\ref{prop:lastCluster}.
}
To this end we can use a max-heap $\cal H$,
as in Algorithm~1. 
Initially,
we place $O(n)$ critical capacities of the form \eqref{eqn:critical} in $\cal H$,
one for each cluster in $\beta_R(u_b)$.
We remove the max value from $\cal H$,
and if it is larger than $c$,
merge the corresponding two clusters.
Every time two clusters merge,
a new critical capacity is inserted into $\cal H$.
See Proposition~\ref{prop:merge}.
Since a cluster disappears after every merger,
the total number of insertions into $\cal H$ is $O(n)$,
costing a total of $O(n\log n)$ time.

After the above ceiling operation,
we need to deal with the backlog of $w$.
As discussed in Sec.~\ref{sec:sectionSeq},
using the condition (\ref{eqn:test1}),
we can use binary search to find the extent of the stretched cluster in the ceiled $\beta_R(u_b)$.
It is clear that the total time for all critical capacities is $O(n\log n)$.
\qed
\end{proof}

\begin{proposition}\label{prop:merge}
Let $C_m$ be the cluster corresponding to the max value from ${\cal H}$.
(Thus $\lambda(C_{m+1})/t_{m+1}\le \lambda(C_m)/t_m$ holds.)
Ceiling them by $\lambda(C_m)/t_m$,
$C_m$ and $C_{m+1}$ are merged into one cluster with
critical capacity $(\lambda(C_m)+\lambda(C_{m+1})/(t_m+ t_{m+1})$,
which satisfies
\[
\lambda(C_{m+1})/t_{m+1}\le(\lambda(C_m)+\lambda(C_{m+1})/(t_m+ t_{m+1})
\le \lambda(C_m)/t_m\lambda(C_m).
\qed
\] 
\end{proposition}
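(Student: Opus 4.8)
The plan is to split the claim into two essentially independent parts: first, that ceiling and merging $C_m$ and $C_{m+1}$ produces a cluster whose critical capacity is exactly the mediant $(\lambda(C_m)+\lambda(C_{m+1}))/(t_m+t_{m+1})$; and second, that this mediant is sandwiched between $\lambda(C_{m+1})/t_{m+1}$ and $\lambda(C_m)/t_m$.

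For the first part I would argue directly from the definition of critical capacity in \eqref{eqn:critical}. The merged cluster begins at the start time of $C_m$ and carries all $\lambda(C_m)+\lambda(C_{m+1})$ evacuees of its two constituents. Since $C_m$ starts a time $t_m$ before $C_{m+1}$, and $C_{m+1}$ starts a time $t_{m+1}$ before the following cluster $C_{m+2}$, the successor of the merged cluster is $C_{m+2}$, which starts $t_m+t_{m+1}$ after the merged cluster. Substituting the carried weight $\lambda(C_m)+\lambda(C_{m+1})$ and the time gap $t_m+t_{m+1}$ into \eqref{eqn:critical} yields the stated mediant. This step is purely bookkeeping and I expect no difficulty.

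For the second part I would invoke the elementary mediant inequality. Writing $a=\lambda(C_{m+1})$, $b=t_{m+1}$, $c=\lambda(C_m)$, $d=t_m$, all strictly positive, the hypothesis $\lambda(C_{m+1})/t_{m+1}\le\lambda(C_m)/t_m$ is exactly $a/b\le c/d$, which (on multiplying through by $bd>0$) is equivalent to $ad\le bc$. For the lower bound $a/b\le(a+c)/(b+d)$ I would cross-multiply by the positive quantities $b$ and $b+d$, reducing it to $a(b+d)\le b(a+c)$, i.e.\ $ad\le bc$; for the upper bound $(a+c)/(b+d)\le c/d$ I would cross-multiply by $d$ and $b+d$, reducing it to $d(a+c)\le c(b+d)$, again $ad\le bc$. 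Both target inequalities thus collapse to the single hypothesis $ad\le bc$, so both hold.

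There is no genuine obstacle here: the proposition is just an instance of the mediant inequality. The only point needing care is the bookkeeping in the first part, namely confirming that the time gap attached to the merged cluster is the \emph{sum} $t_m+t_{m+1}$ rather than either individual gap, which holds because the successor of the merged cluster is $C_{m+2}$. Once that observation is made, the two displayed inequalities are immediate, and the upper bound in particular guarantees that the value inserted into $\cal H$ never exceeds the maximum just extracted, which is what the surrounding argument in Lemma~\ref{lem:tau1} relies on.
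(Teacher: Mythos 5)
Your proof is correct and supplies exactly the justification the paper leaves implicit: the proposition is stated with no proof at all (the \qed is attached directly to the display), and the intended argument is precisely the mediant inequality $a/b \le (a+c)/(b+d) \le c/d$ for $ad \le bc$ that you give, together with the bookkeeping that the merged cluster carries weight $\lambda(C_m)+\lambda(C_{m+1})$ and has time gap $t_m+t_{m+1}$ to its successor $C_{m+2}$. You have also, correctly, read through the typographical slips in the paper's displayed inequality (the unbalanced parenthesis in the mediant and the spurious trailing $\lambda(C_m)$ in the upper bound); your interpretation of the intended statement is the right one.
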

By the above proposition,
if we ceil $\beta_R(u_b)$ by $\lambda(C_m)/t_m$,
then the critical capacity for the merged cluster is less than $\lambda(C_m)/t_m$,
which is inserted in ${\cal H}$.

\subsubsection{Computing $R(i,j)$}
~

To compute $R(i,j)$,
we need to know the arriving cluster/section sequence at $v_i$
of the evacuees from the vertices on $P[v_{i+1}, v_j]$.
To discuss this problem,
let ${\cal P}[v_{i+1}, v_j]$ denote the set of maximal subpaths of $P[v_{i+1}, v_j]$
spanned by $t=O(\log n)$ nodes,
$u_1, u_2, \ldots, u_t$, of $\cal T$,
in this order from left to right.
We combine the departure section sequences stored at  these nodes
into a single sequence $\alpha_R([v_i,v_j])$.\footnote{Recall that $\alpha_R([v_i,v_j])$ does not
include the evacuees from $v_i$.
Thus it is obtained by shifting $\beta_R([v_{i+1},v_j])$.
}
Starting with $\sigma=\beta_R(u_1)$,
we update $\sigma$ by merging it with shifted $\beta_R(u_2), \beta_R(u_3)$, $\ldots$,
until all of them are merged into one section sequence.
The left part of Fig.~\ref{fig:merge5} shows the subpaths merged so far (current $\sigma$),
and the right triangle shows the next subtree ${\cal T}(u)$ to be merged to $\sigma$.
The shift amount for $\beta_R(u_s)$ is $d(v_L(u_1), v_L(u_s))\tau$.
When we merge $\sigma$ with $\beta_R(u_s)$,
the capacity $c(v_{i-1}, v_L(u_s))$ ($c$ in Fig.~\ref{fig:merge5}) must be used
to ceil the shifted $\beta_R(u_s)$.

As in the proof of Lemma~\ref{lem:tau1},
there are the following two main tasks in computing the departure
sequences and their costs.
\begin{enumerate}
\item[{[T1]}]
Implement ceiling operations efficiently.
\item[{[T2]}]
Implement the spreading of the backlog efficiently.
\end{enumerate}
We can deal with Task T1,
using a max-heap $\cal H$, as we did in the proof of Lemma~\ref{lem:tau1}.
However, we need to use the ceiling operations,
not only in preprocessing, 
but also in later processing.
Therefore, we need a method by which we can ``reuse'' results of precomputation
repeatedly.
Our main interest is in solving \eqref{eqn:test1} for the end time $\bar{t}$.
But it cannot be done efficiently (i.e., by binary search)
unless Task1 is carried out first.
To this end we construct a forest as follows.

\subsubsection{Cluster forest ${\cal F}_u$}\label{sec:mergeTree}
Arrange the nodes representing the clusters in $\beta_R(u)$ horizontally
from left to right in their order in $\beta_R(u)$.
They are the leaf nodes of the trees in ${\cal F}_u$.
At each leaf node, we store data about the sections (start time, end time, and its height)
in the cluster it represents.
When two clusters merge,
we introduce their parent node,
and record the critical capacity that caused the merger,
as well as the data on the constituent sections.
Here we can use a max-heap $\cal H$ again.
We put a newly computed weight/time ratio in $\cal H$,
provided it is not less than $c_0 =\min\{c_i\mid 1\le i\le n-1\}$.
The construction of ${\cal F}_u$ completes when $\cal H$ becomes empty.
Proposition~\ref{prop:merge} implies the following.
\begin{proposition}\label{prop:capacityOrder}
On the path from a leaf to the root of any tree in the forest ${\cal F}_u$,
the critical capacities recorded at the nodes visited are non-increasing.
\end{proposition}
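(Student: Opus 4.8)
The plan is to reduce this structural claim about root-to-leaf paths to a purely temporal monotonicity property of the max-heap $\cal H$ that drives the construction of ${\cal F}_u$. First I would observe that, by the way ${\cal F}_u$ is built, the critical capacity recorded at any internal node is exactly the value extracted as the maximum from $\cal H$ at the instant that node is created (this is the value that ``caused the merger''). Consequently, the proposition is equivalent to the statement that, as we traverse the ancestors of a leaf from the leaf up to the root, the associated heap-extraction steps occur later and later while the extracted values do not increase.

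The central lemma I would prove is that the sequence of successive maxima extracted from $\cal H$ over the entire construction is non-increasing in time. This follows immediately from Proposition~\ref{prop:merge}: when the current maximum $\lambda(C_m)/t_m$ is removed and the corresponding two clusters are merged, the only element inserted back into $\cal H$ is the critical capacity $(\lambda(C_m)+\lambda(C_{m+1}))/(t_m+t_{m+1})$ of the merged cluster, and Proposition~\ref{prop:merge} bounds this above by $\lambda(C_m)/t_m$. Hence right after each extraction every element still in $\cal H$ is at most the value just removed, and by a straightforward induction on the extraction steps no later extracted maximum can exceed an earlier one.

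Next I would connect this to the forest structure. A node $y$ of ${\cal F}_u$ is introduced only when the two clusters represented by its children are merged, so both children must already exist at that moment; thus every child is created at a strictly earlier extraction step than its parent. Walking from a leaf to the root therefore visits nodes whose creation times increase monotonically, and the lemma then forces their recorded critical capacities to be non-increasing, which is precisely the assertion. The leaf endpoint is handled by noting that when a cluster's own critical capacity is the value that triggers its first merger, the value recorded at the resulting parent coincides with that critical capacity, so equality holds there and the monotone chain extends down to the leaf.

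The main obstacle is establishing the heap invariant cleanly, namely that $\cal H$ never contains, after a given extraction, any value exceeding the value just extracted; everything else is bookkeeping about creation order. This invariant is exactly the content of the two-sided bound in Proposition~\ref{prop:merge}, so the real work is already done there, and the proof reduces to combining that bound with the observation that a child node is always created before its parent.
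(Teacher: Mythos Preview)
Your proposal is correct and follows exactly the paper's approach. The paper's proof is the single remark that Proposition~\ref{prop:merge} implies the result, and your argument---the bound in Proposition~\ref{prop:merge} forces successive heap maxima to be non-increasing, while each parent in ${\cal F}_u$ is created at a strictly later extraction step than its children---is precisely what unpacks that implication.
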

Based on the above proposition,
we can use Procedure {\sc Find-cluster}($c,v_h$) below 
to solve the following problem:
Given a capacity $c$ and a vertex $v_h$,
find the cluster that contains $v_h$ when $\beta_R(u)$ is ceiled by a given capacity $c$.
\begin{floatTogether}
\begin{procedure}[H]\label{proc:proc-3}
    \SetKwInOut{Input}{Input data}
    \SetKwInOut{Output}{Output}
\Input{${\cal F}_u$\;}
\Output{Cluster that contains $v_h$ when $\beta_R(u)$ is ceiled by $c$
}
\BlankLine
Find the tree $T$ in ${\cal F}_u$ that contains $v_h$ in a leaf node,
and let $c'$ be the capacity stored there\;
\While {$c\le c'$}{
Move to the parent node $q'$ and let $c'$ be the capacity stored at $q'$\;
 }
Let $q$ be the child node of $q'$ visited just before $q'$.
Output the cluster stored at $q$.
\caption{Find-cluster($c,v_h$)}
\end{procedure}
\end{floatTogether}
This procedure is useful for Task T2,
since we can do binary search on vertices to find $\bar{t}$ by solving \eqref{eqn:test1}.

\subsubsection{Skip lists}
Procedure {\sc Find-Cluster}($c,v_h$), as stated, is very simple,
but its major drawback is that it requires linear time to run.
To make it sublinear,
we introduce {\em skip lists}~\cite{pugh1990}.
Consider any tree $T$ in ${\cal F}_u$.
We rearrange its nodes in such a way that the path from any node
to the farthest leaf node in the left subtree is not shorter than that in the right subtree.
We first construct the skip lists for the path $\pi_1$ from the leftmost leaf node to the root
in the resulting tree.
We then construct the skip lists for the path $\pi_2$ from the second leftmost leaf node to the root.
From the node of $T$,
where $\pi_1$ and $\pi_2$ meet,
to the root,
we let $\pi_2$ share the skip lists which were already constructed for $\pi_1$.
To construct this data structure more systematically,
we start at the root of $T$,
and place pointers at nodes which are at distances at every 2 nodes, 4 nodes, 8 nodes, etc.,
according to the guideline in \cite{pugh1990}.
This implies that $O(\log n)$ {\em levels} of skip lists are constructed.
In each skip list at a level,
pointers point to nodes that are closer to the root on the path to the root.\footnote{The pointers
in a skip list at a level skip the same number of nodes.
}
Finally, we place a pointer from a leaf node to the first node of the pointer chain
in each skip list.
It is easy to see that the total number of pointers is $O(n)$,
and placing them takes $O(n\log n)$ time.

\begin{lemma}\label{lem:useSkipList}
If the skip lists are superimposed on ${\cal F}_u$ as above,
we can find the cluster that contains $v_h$ when $\beta_R(u)$ is ceiled by $c$
in $O(\log n)$ time.
\end{lemma}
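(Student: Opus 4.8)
The plan is to show that {\sc Find-cluster}$(c,v_h)$, when accelerated by the superimposed skip lists, traverses the root-path of a single tree $T\in{\cal F}_u$ using $O(\log n)$ pointer jumps rather than walking node by node. First I would recall from Proposition~\ref{prop:capacityOrder} that the recorded critical capacities are non-increasing along any leaf-to-root path of $T$. Hence, for the fixed query capacity $c$, the predicate ``$c\le c'$'' (where $c'$ is the capacity stored at the current node) is \emph{monotone} along the path: it holds on a lower contiguous segment near the leaf and fails thereafter. The target node $q$ is exactly the boundary of this segment---the last node whose stored capacity is still $\ge c$---so locating the cluster reduces to a predecessor/threshold search along a monotone sequence on the path $\pi$ from the leaf containing $v_h$ up to the root.

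The next step is to exploit the skip-list levels built in the preceding paragraph. Starting at the leaf $v_h$, I follow the leaf-to-chain pointer into the highest-level skip list and repeatedly attempt the longest available jump toward the root; a jump is accepted only if the capacity at its destination still satisfies $c\le c'$ (i.e.\ we have not yet crossed the threshold), otherwise I descend one level and try a shorter jump. Because the skip lists are organized with $O(\log n)$ levels and each level's pointers skip a geometrically increasing number of nodes (2, 4, 8, \ldots, as in \cite{pugh1990}), a standard skip-list argument bounds the number of jumps at each level by a constant in expectation (or, given the deterministic every-$2^\ell$-nodes placement described above, by a constant worst-case count per level). Summing over the $O(\log n)$ levels yields $O(\log n)$ total pointer traversals to reach the threshold node, after which the child $q$ visited just before crossing is identified in constant time and its stored cluster is returned.

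The one point needing care is that a path $\pi_2$ may \emph{share} the upper skip lists originally constructed for $\pi_1$, since from the meeting node up to the root the two paths coincide; I would verify that the shared pointers still encode correct jump lengths relative to whichever leaf initiated the query, which holds because the pointers skip a fixed number of \emph{path} nodes and the shared suffix is identical for both paths. I expect this sharing-correctness check to be the main subtlety: the distances measured by a skip pointer must be consistent whether the search entered the shared suffix from $\pi_1$ or from $\pi_2$, and this is guaranteed precisely by the rebalancing convention (the left subtree's longest leaf-path is no shorter than the right's) that aligns the shared portions. Granting this, the monotonicity of the capacities plus the $O(\log n)$-level skip-list structure deliver the claimed $O(\log n)$ query time, completing the proof.
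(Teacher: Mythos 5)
Your argument is correct and supplies exactly the justification the paper leaves implicit (the lemma is stated there without an explicit proof): Proposition~\ref{prop:capacityOrder} makes the test $c\le c'$ monotone along the leaf-to-root path, so the target node is located by a standard threshold search descending through the $O(\log n)$ skip-list levels. One minor correction: the consistency of the shared suffix pointers follows simply from the pointers being placed by distance from the root (hence identical for every path through that suffix), while the left/right rearrangement convention serves to bound the total number of pointers by $O(n)$ rather than to guarantee correctness of the shared jumps.
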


\subsubsection{Intra costs}

Let us now consider the intra cost component of $R(i,j)$.
The intra costs must be updated whenever two clusters get merged by a critical
capacity out of $\cal H$.
It can be done during prepocessing as in Lemma~\ref{lem:intra},
and we store the updated intra cost 
for the entire subpath $P[v_L(u), v_R(u)]$
at the corresponding node in ${\cal F}_u$ that stores the merged cluster.

\begin{lemma}\label{lem:Icost4forest}
We can compute the intra costs at all the nodes of ${\cal F}_u$ in time linear in the number
of leaf vertices in ${\cal T}(u)$.
\end{lemma}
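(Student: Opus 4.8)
The plan is to fold the intra-cost maintenance into the construction of the forest ${\cal F}_u$, so that essentially no work beyond a constant amount of bookkeeping per merger is charged to the intra-cost computation. The construction already pops critical capacities from the max-heap in decreasing order (Proposition~\ref{prop:merge}) and creates exactly one node per merger; I would carry along a single running scalar equal to the total intra cost of the current section sequence and, at the instant each node is created, update this scalar and store its new value at that node --- this is precisely the per-node ``intra cost for the entire subpath $P[v_L(u),v_R(u)]$'' that later steps query. Let $n_u$ denote the number of leaf vertices of ${\cal T}(u)$. By Proposition~\ref{prop:sectionCreation} the number of clusters, and hence sections, in $\beta_R(u)$ is $O(n_u)$; these are the leaves of ${\cal F}_u$, and since every internal node is born of a merger that reduces the number of live clusters by one, ${\cal F}_u$ has $O(n_u)$ nodes altogether.

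For the update rule I would reuse the derivation preceding Lemma~\ref{lem:intra}, where the total intra cost of a sequence is written as $\sum_{G}\hat{\lambda}_G/(2h(G))$, the sum running over height groups $G$ with $\hat{\lambda}_G$ the sum of squared section weights of \eqref{eqn:squaredWeights}. When a merger of critical capacity $c'$ combines two adjacent clusters, I would subtract from the running scalar the squared weights of the sections that vanish, add the squared weight of the newly formed section divided by $2c'$, and correct the affected $\hat{\lambda}_G$, exactly as in the proof of Lemma~\ref{lem:intra}; the scalar so obtained is then stored at the node just created.

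The time bound then rests on the same charging argument that gave the global $O(n)$ figure in Lemma~\ref{lem:intra}. A single merger may absorb several sections, but each section can vanish at most once over the entire construction, so the cumulative number of squared-weight subtractions is $O(n_u)$; every other operation at a merger --- one addition, one division by the recorded $c'$, and one write into the node --- is $O(1)$, and there are $O(n_u)$ mergers. Summing, all per-node intra costs are produced in $O(n_u)$ time, which is linear in the number of leaf vertices of ${\cal T}(u)$ as claimed; note that, because the merge order is supplied by the forest construction itself, no sorting or heap operation is charged here, which is what keeps the bound linear rather than $O(n_u\log n_u)$. The delicate point, and the one I would verify most carefully, is exactly this amortization: maintaining $\hat{\lambda}_G$ incrementally (a naive recomputation would cost $\Theta(n_u)$ per node) is valid only if the invariant ``each section is absorbed at most once'' genuinely survives the backlog and gap-filling interactions in the forest setting.
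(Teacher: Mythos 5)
Your proposal is correct and takes essentially the same route as the paper: the paper offers no separate proof of this lemma beyond the preceding remark that the intra costs are updated during the construction of ${\cal F}_u$ exactly as in Lemma~\ref{lem:intra} (amortized constant work per section, since each section can disappear only once) and that the running total is stored at the node created by each merger. Your running-scalar bookkeeping, the $O(n_u)$ bound on the number of nodes of ${\cal F}_u$, and the charging argument for vanishing sections are precisely that argument.
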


\begin{lemma}\label{lem:nonuniformRij2}
If $\cal T$ (with $\beta_R(u)$ and ${\cal F}_u$ at every node $u$) and the skip lists for
${\cal F}_u$ are given,
then we have $t_R(n)=O(\log^3 n)$.
\end{lemma}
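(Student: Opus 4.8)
The plan is to compute $R(i,j)$ by assembling the arrival sequence $\alpha_R([v_i,v_j])$ out of the departure sequences stored at the $t=O(\log n)$ canonical nodes $u_1,\ldots,u_t$ of $\cal T$ that span the maximal subpaths of $P[v_{i+1},v_j]$, exactly as set up in Sec.~\ref{sec:Rij}. I would process $u_1,\ldots,u_t$ from left to right, maintaining a running sequence $\sigma$ together with its accumulated extra and intra cost. When $\beta_R(u_s)$ is merged into $\sigma$, two operations must be carried out: (T1) ceil the shifted $\beta_R(u_s)$ by the bottleneck capacity $c=c(v_{i-1},v_L(u_s))$, and (T2) spread the backlog $w$ accumulating at the junction $v_L(u_s)$ between $\sigma$ and $\beta_R(u_s)$. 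The whole argument reduces to showing that each of the $O(\log n)$ merges can be done in $O(\log^2 n)$ time by reusing the precomputed forest ${\cal F}_{u_s}$ rather than rebuilding the sequence from scratch.

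For T1 I would never recompute the ceiling directly. By Proposition~\ref{prop:capacityOrder} the critical capacities along any root path of ${\cal F}_{u_s}$ are non-increasing, so the clusters of $\beta_R(u_s)$ after ceiling by $c$ are precisely the nodes returned by {\sc Find-Cluster}$(c,\cdot)$; with the skip lists superimposed on ${\cal F}_{u_s}$ this query runs in $O(\log n)$ time by Lemma~\ref{lem:useSkipList}. The intra cost of the ceiled sequence over the entire subpath $P[v_L(u_s),v_R(u_s)]$ is stored at the corresponding forest node by Lemma~\ref{lem:Icost4forest}, so it too is read off in $O(\log n)$ time, while the extra-cost contribution follows from the offset $d(v_L(u_1),v_L(u_s))\tau$ and the prefix-weight array as in \eqref{eqn:gaph1}. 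Hence, absent any backlog, one merge costs only $O(\log n)$.

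The genuine work is T2. The backlog $w$ carried over from $\sigma$ interacts only with the leading, already-ceiled clusters of $\beta_R(u_s)$, and by Proposition~\ref{prop:lastCluster} its amount is available in constant time. To find the end time $\bar t$ of the stretched cluster I would binary-search over the vertices for the smallest $t$ satisfying \eqref{eqn:test1}, using $O(\log n)$ probes. In each probe the right-hand side $w+\sum_{m=1}^{l}\lambda(C_m)$ is a prefix sum of vertex weights and is read from the $W[\cdot]$ array in $O(1)$ time, whereas the matching start/end time of the relevant ceiled cluster is obtained by a single {\sc Find-Cluster}$(c,\cdot)$ call in $O(\log n)$ time. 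Thus one junction costs $O(\log n)\times O(\log n)=O(\log^2 n)$, and the intra-cost adjustment—removing the swallowed clusters and inserting the stretched one—piggybacks on these same queries, using the precomputed node values of Lemma~\ref{lem:Icost4forest}.

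Summing over the $t=O(\log n)$ canonical nodes yields $t_R(n)=O(\log n)\cdot O(\log^2 n)=O(\log^3 n)$, and $L(i,j)$ is handled symmetrically in time $t_L(n)=O(\log^3 n)$. The main obstacle is T2: I must argue that backlog spreading touches only clusters that {\sc Find-Cluster} can name without ever materializing the merged sequence, and that every quantity the binary search consults—partial weights, offsets, and the intra-cost corrections—is expressible through the precomputed forest annotations and prefix arrays, so that no step of the nested search degrades to linear time.
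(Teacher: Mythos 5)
Your proposal is correct and follows essentially the same route as the paper: decompose $P[v_{i+1},v_j]$ into $O(\log n)$ canonical nodes of $\cal T$, and for each node spend $O(\log^2 n)$ time on a binary search for the backlog extent via \eqref{eqn:test1}, where each of the $O(\log n)$ probes costs $O(\log n)$ through a {\sc Find-Cluster} query on the skip-list-augmented forest (Lemma~\ref{lem:useSkipList}), giving $O(\log^3 n)$ overall. The paper's own proof is terser but performs exactly this three-way product of logarithmic factors; your additional remarks on reading intra costs from the forest annotations and extra costs from prefix sums match the paper's surrounding setup.
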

\begin{proof}
We need to identify the head vertex of the cluster that carries vertex $v_h$
that is probed by binary search for capacity $c=c(v_{i-1}, v_L(u_s))$.
This can be done in $O(\log n)$ time by Lemma~\ref{lem:useSkipList}.
and it takes $O(\log^2 n)$ time for all the $O(\log n)$ probes.
We need to repeat this for each of the $O(\log n)$ subtrees of $\cal T$ that spans
a subpath in ${\cal P}[v_{i+1}, v_j]$.
\end{proof}

\subsection{Main results}
The correctness of our approach follows from the fact that we are solving the DP problem
formulated by \eqref{eqn:recursive1} and \eqref{eqn:recursive2},
which are clearly correct,
and the correctness of Algorithm~2, 
which is an implementation of \eqref{eqn:recursive1} and \eqref{eqn:recursive2},
whose correctness we have argued in our discussions so far.
From Lemmas~\ref{lem:xii'2}, \ref{lem:main:ksink1}, \ref{lem:nonuniformRij2}, and \ref{lem:uniform},
we obtain our final results.
\begin{theorem}\label{thm:main.ksink1}
The minsum $k$-sink problem in path networks with non-uniform
edge capacities can be solved in $O(kn\log^4 n)$ time.
\end{theorem}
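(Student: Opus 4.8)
The plan is to assemble the running time from the component lemmas rather than re-derive any of the data-structure machinery. The natural starting point is Lemma~\ref{lem:main:ksink1}, which already reduces the entire problem to bounding a single quantity: the total cost of Algorithm~2 is $O(kn\cdot t_X(n))$ plus the time to build the preprocessing data structures. So the first step is simply to substitute a concrete bound for $t_X(n)$ and then separately argue that preprocessing is a lower-order term.

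Next I would chain the two bounds that govern $t_X(n)$. By Lemma~\ref{lem:xii'2} we have $t_X(n)=O(t_R(n)\log n)$, since each switching point of \eqref{eqn:switchPt} is located by a binary search in which every probe evaluates $R(\cdot,\cdot)$. Feeding in Lemma~\ref{lem:nonuniformRij2}, which gives $t_R(n)=O(\log^3 n)$ once the cluster sequence tree $\cal T$ (with $\beta_R(u)$ and the cluster forest ${\cal F}_u$ stored at every node) together with the superimposed skip lists are in place, yields $t_X(n)=O(\log^4 n)$. Substituting into Lemma~\ref{lem:main:ksink1} gives a dynamic-programming cost of $O(kn\log^4 n)$. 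The symmetric quantities $L(i,j)$ and the left-to-right recursion \eqref{eqn:recursive2} are handled identically, so they contribute the same asymptotic cost, and the sink set $S^*$ is recovered from the optimal value by the standard backtracking indicated in Algorithm~2 within the same bound.

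The remaining obligation is to verify that the preprocessing is genuinely dominated. Here I would argue that $\cal T$ together with all departure sequences $\beta_R(u)$ is built in $O(n\log n)$ time by Lemma~\ref{lem:tau1}; that each forest ${\cal F}_u$ together with its intra-cost annotations costs time linear in the number of leaf vertices $n_u$ of ${\cal T}(u)$ by Lemma~\ref{lem:Icost4forest}, up to the $O(\log n)$ factor incurred by the heap-driven merges and by the skip-list construction preceding Lemma~\ref{lem:useSkipList}. Since the subtrees at any fixed depth of $\cal T$ partition the leaves, $\sum_u n_u = n$ per level, and there are $O(\log n)$ levels, so the whole preprocessing totals $O(n\log^2 n)$. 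As $k\ge 1$, this is absorbed into $O(kn\log^4 n)$, and the theorem follows.

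I expect the only genuinely delicate point in this assembly to be the accounting that keeps preprocessing out of the leading term and that correctly counts each $x(\cdot,\cdot)$ evaluation only once per insertion or deletion from an index set; the rest is a direct composition of already-proved bounds. In particular, the hard technical content — that a single evaluation of $R(i,j)$ costs only $O(\log^3 n)$ rather than linear time — lives entirely inside Lemma~\ref{lem:nonuniformRij2} and the skip-list-augmented cluster forest it relies on, so beyond careful bookkeeping no new work is required to close out the theorem.
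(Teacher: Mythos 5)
Your proposal is correct and follows essentially the same route as the paper, which obtains the theorem by chaining Lemma~\ref{lem:main:ksink1} ($O(kn\cdot t_X(n))$ plus preprocessing), Lemma~\ref{lem:xii'2} ($t_X(n)=O(t_R(n)\log n)$), and Lemma~\ref{lem:nonuniformRij2} ($t_R(n)=O(\log^3 n)$), with the preprocessing absorbed as a lower-order term. Your additional bookkeeping for the preprocessing cost is consistent with the paper's Lemmas~\ref{lem:tau1} and~\ref{lem:Icost4forest} and adds nothing that changes the bound.
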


\subsubsection{Uniform capacity case}\label{sec:uniform}
Assume that all the edges have the same capacity $c$,
so that every cluster (=section) has height $c$.
In this case, there is no need for ceiling operations,
hence no need for forest ${\cal F}_u$.

\begin{lemma}\label{lem:uniform}
\begin{enumerate}
\item[(a)]
We can construct $\beta_R(u)$ at each node $u$ of $\cal T$,
and compute its extra and intra costs in $O(n\log n)$ time.
\item[(b)]
If $\cal T$ is given,
then we have $t_R(n) =O(\log^2 n)$.
 \end{enumerate}
\end{lemma}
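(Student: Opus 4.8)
The plan is to reuse the machinery already developed for the general case (Lemmas~\ref{lem:tau1} and \ref{lem:nonuniformRij2}) while exploiting the single structural simplification of the uniform setting: since every cluster has the same height $c$, no ceiling operation is ever required, so clusters merge only when a backlog fills the gap between them. Consequently the forest ${\cal F}_u$ and its skip lists---whose sole purpose in the general case was to locate, after ceiling, the cluster containing a probed vertex---are no longer needed, and the factor of $\log n$ they contributed in Lemma~\ref{lem:nonuniformRij2} disappears.

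For part (a) I would build $\cal T$ bottom-up exactly as in Lemma~\ref{lem:tau1}. A leaf stores a single cluster of height $c$. At an internal node $u$ with children $u_a$ (left) and $u_b$ (right), I form $\beta_R(u)$ by taking $\beta_R(u_a)$ followed by $\beta_R(u_b)$ shifted by $d(v_L(u_a),v_L(u_b))\tau$, and then fixing only the interface. By Proposition~\ref{prop:lastCluster} the backlog present at $v_L(u_a)$ when the first evacuee of $\beta_R(u_b)$ arrives is (the tail of) the last cluster of $\beta_R(u_a)$ alone; with no ceiling, this backlog merely spreads into the leading clusters of the shifted $\beta_R(u_b)$, fusing them into one cluster of height $c$. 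I locate the extent of this fusion by the same binary search on condition \eqref{eqn:test1} used in the proof of Lemma~\ref{lem:secSequences}, in $O(\log n)$ time, provided each stored sequence carries prefix sums of cluster weights, extra costs, and intra costs so that each probe costs $O(1)$. The merged cluster's weight and its intra cost $\lambda^2/2c$, and the offset added to the extra cost of the shifted part, are then read off in $O(1)$ from these prefix sums, without touching any of the clusters absorbed into the fusion. Since a node spanning $m$ leaves stores at most $m$ clusters and the nodes at a fixed level of $\cal T$ partition the $n$ leaves, each level contributes at most $n$ clusters, and summing over the $O(\log n)$ levels bounds the total number of stored clusters by $O(n\log n)$; hence both the copying and the $O(\log n)$-per-node interface work sum to $O(n\log n)$, proving (a).

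For part (b) I would, given $\cal T$, decompose $P[v_{i+1},v_j]$ into its $O(\log n)$ canonical subtrees $u_1,\ldots,u_t$ and merge the stored sequences $\beta_R(u_1),\beta_R(u_2),\ldots$ from left to right, each shifted by the appropriate distance, maintaining a running sequence $\sigma$. Each of the $O(\log n)$ interfaces is handled as in (a): the backlog carried by the current last cluster of $\sigma$, again localized by Proposition~\ref{prop:lastCluster}, spreads into the leading clusters of the next subtree, and one binary search on \eqref{eqn:test1} finds its extent in $O(\log n)$. Crucially, because no ceiling is needed the probe is a plain prefix-sum evaluation rather than a {\sc Find-cluster} call, so we avoid the extra $\log n$ factor of Lemma~\ref{lem:nonuniformRij2}; the running cost $R(i,j)=E+I$ is updated in $O(1)$ per interface from the same prefix data. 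Over $O(\log n)$ interfaces this yields $t_R(n)=O(\log^2 n)$.

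The step I expect to require the most care is the cost bookkeeping during a fusion. Because the intra cost $\lambda^2/2c$ is quadratic in the cluster weight, merging $l$ leading clusters into the backlog must be charged in $O(1)$---i.e.\ by reading the combined weight and the sum of the individual intra costs from prefix arrays and replacing them by a single squared term---rather than by iterating over the $l$ absorbed clusters, in the spirit of the amortized argument of Lemma~\ref{lem:intra}. I also need to verify that when a fused cluster of $\sigma$ is large enough to consume an entire canonical subtree and continue into the next one, the interface-by-interface processing still charges each subtree boundary a single binary search, so that the total stays at $O(\log n)$ binary searches of cost $O(\log n)$ each.
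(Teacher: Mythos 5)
Your proposal is correct and follows essentially the same route as the paper: part (a) reuses the bottom-up construction of Lemma~\ref{lem:tau1} with intra costs maintained as in Sec.~\ref{sec:extraIntra}, and part (b) observes that without ceiling only Task T2 remains, costing one $O(\log n)$ binary search at each of the $O(\log n)$ canonical-subtree interfaces. Your additional bookkeeping details (prefix sums for $O(1)$ probes, amortized handling of the quadratic intra cost, fusions spanning subtree boundaries) are all consistent with, and merely more explicit than, the paper's two-sentence argument.
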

\begin{proof}
(a) 
We can construct $\beta_R(u)$ in $O(n\log n)$ time by Lemma~\ref{lem:tau1}. 
We can also compute the intra costs bottom up,
as in our algorithm for the 1-sink problem discussed in Sec.~\ref{sec:extraIntra}.

(b) 
The most time consuming operation is carrying Task2 in each of $O(\log n)$ mergers,
which takes $O(\log n)$ time per merger.
\end{proof}

\begin{theorem}\label{thm:main.ksink2}
The minsum $k$-sink problem in path networks with uniform edge capacities
can be solved in
$O(kn\log^3 n)$ time.
\end{theorem}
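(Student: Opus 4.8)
The plan is to obtain the uniform-capacity bound by specializing the general-capacity machinery of Algorithm~2, substituting the cheaper cost-evaluation primitive that becomes available when all edges share one capacity. First I would invoke Lemma~\ref{lem:main:ksink1}, which already establishes that the minsum $k$-sink can be computed in $O(kn\cdot t_X(n))$ time plus the preprocessing needed to make $R(i,j)$ (and, symmetrically, $L(i,j)$) available. The entire argument then reduces to bounding $t_X(n)$ and the preprocessing in the uniform setting and checking that they stay within the claimed budget.

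For the per-query time, I would chain Lemma~\ref{lem:xii'2} with Lemma~\ref{lem:uniform}(b). The former gives $t_X(n)=O(t_R(n)\log n)$, since a switching point is located by binary search over the pivot $j'$ in \eqref{eqn:switchPt} and each probe costs one evaluation of $R(\cdot,\cdot)$. The latter gives $t_R(n)=O(\log^2 n)$ once the cluster sequence tree $\cal T$ is available, because with a single capacity no ceiling operation ever occurs, so each of the $O(\log n)$ subtree merges that assemble $\alpha_R([v_i,v_j])$ only has to spread a backlog, at cost $O(\log n)$ via binary search. Composing these yields $t_X(n)=O(\log^3 n)$, and hence the DP loops of Algorithm~2 run in $O(kn\log^3 n)$ time.

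It then remains to verify that the preprocessing fits inside this budget. Here the decisive simplification is that the uniform case dispenses entirely with the forest ${\cal F}_u$ and the superimposed skip lists that drove the general-case preprocessing, since those structures existed only to support ceiling operations (Proposition~\ref{prop:capacityOrder} and Lemma~\ref{lem:useSkipList}). By Lemma~\ref{lem:uniform}(a), the tree $\cal T$, together with $\beta_R(u)$ and its extra and intra costs at every node, is built bottom up in $O(n\log n)$ time, exactly as in the 1-sink algorithm of Sec.~\ref{sec:extraIntra}. The seed array $\{G^1(j)\}=\{\Phi_L(v_j)\}$ for the recursion is supplied by Corollary~\ref{cor:costs}, also in $O(n\log n)$ time. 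All preprocessing is thus $O(n\log n)$ and is absorbed into the dominant term.

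Putting the pieces together, the total running time is $O(kn\log^3 n)+O(n\log n)=O(kn\log^3 n)$, as claimed. The one step I would scrutinize most carefully is the bound $t_R(n)=O(\log^2 n)$ of Lemma~\ref{lem:uniform}(b): I must confirm that, with uniform heights, merging the $O(\log n)$ precomputed sequences $\beta_R(u_s)$ into $\alpha_R([v_i,v_j])$ genuinely avoids any heap-based ceiling and collapses to $O(\log n)$ backlog-spreading steps of $O(\log n)$ each, so that the saved logarithmic factor over the non-uniform bound $t_R(n)=O(\log^3 n)$ of Lemma~\ref{lem:nonuniformRij2} is real rather than merely formal.
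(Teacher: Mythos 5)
Your proposal is correct and follows essentially the same route as the paper, which obtains Theorem~\ref{thm:main.ksink2} by combining Lemma~\ref{lem:main:ksink1} with $t_X(n)=O(t_R(n)\log n)$ from Lemma~\ref{lem:xii'2} and $t_R(n)=O(\log^2 n)$ from Lemma~\ref{lem:uniform}(b), with the $O(n\log n)$ preprocessing of Lemma~\ref{lem:uniform}(a) absorbed into the dominant term. Your closing caveat about verifying that the uniform case truly dispenses with the heap-based ceiling (and hence with ${\cal F}_u$ and the skip lists) is exactly the point the paper makes when justifying Lemma~\ref{lem:uniform}(b).
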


\section{Conclusion}\label{sec:conclusion}
We proposed efficient algorithms based on DP that find a minsum $k$-sink
in path networks.
They run in $O(kn\log^4n)$ and $O(kn\log^3n)$
if the edge capacities are non-uniform and uniform,
respectively.
An open problem is to efficiently solve the minsum $k$-sink problem in networks
that are more general than path networks.
In a straightforward way,
we can find a minsum 1-sink in tree networks in $O(n^2\log^2n)$ time,
applying the $O(n\log^2n)$ time algorithm in \cite{mamada2006} that computes
the {\em arriving table} at each vertex.
We believe we can extend our results to cycle networks with a small increase 
in time complexity.

\section*{Acknowledgement}\label{sec:acknowledgement}
This work is supported in part by NSERC of Canada Discovery Grants,
awarded to Robert Benkoczi and Binay Bhattacharya,
in part by JST {\sc Crest} (JPMJCR1402),
and in part by JSPS {\sc Kakenhi} Grant-in-Aid for Young Scientists (B) (17K12641).


\end{document}